\theoremstyle{plain}
\newtheorem{theorem}{Theorem}[section]
\newtheorem{lemma}[theorem]{Lemma}
\newtheorem{fact}[theorem]{Fact}
\theoremstyle{definition}
\def\ep{\varepsilon}
\def\Var{\text{Var}}
\def\FF{\mathcal{F}}
\def\fhat{\hat{f}}
\def\dhat{\hat{d}}
\def\that{\hat{t}}
\def\m{\!-\!}
\def\eq{\!=\!}
\def\cnd{\text{\textsc{condition}}}
\def\Cnd{\text{\textsc{Condition}}}
\def\upd{\text{\textsc{update}}}
\def\Upd{\text{\textsc{Update}}}
\def\msg{\text{\textsc{message}}}
\def\Msg{\text{\textsc{Message}}}
\title{Variability in data streams}
\author{
David Felber
\thanks{University of California at Los Angeles. \texttt{dvfelber@cs.ucla.edu}.}
\and
Rafail Ostrovsky
\thanks{University of California at Los Angeles. \texttt{rafail@cs.ucla.edu}.}
}
\date{}
\begin{document}

\maketitle

\begin{abstract}

  We consider the problem of tracking with small relative error an integer
  function $f(n)$ defined by a distributed update stream $f'(n)$. Existing
  streaming algorithms with worst-case guarantees for this problem assume $f(n)$
  to be monotone; there are very large lower bounds on the space requirements
  for summarizing a distributed non-monotonic stream, often linear in the size
  $n$ of the stream.

  Input streams that give rise to large space requirements are highly variable,
  making relatively large jumps from one timestep to the next. However, streams
  often vary slowly in practice. What has heretofore been lacking is a framework
  for non-monotonic streams that admits algorithms whose worst-case performance
  is as good as existing algorithms for monotone streams and degrades gracefully
  for non-monotonic streams as those streams vary more quickly.

  In this paper we propose such a framework. We introduce a new stream
  parameter, the ``variability'' $v$, deriving its definition in a way that
  shows it to be a natural parameter to consider for non-monotonic streams. It
  is also a useful parameter. From a theoretical perspective, we can adapt
  existing algorithms for monotone streams to work for non-monotonic streams,
  with only minor modifications, in such a way that they reduce to the monotone
  case when the stream happens to be monotone, and in such a way that we can
  refine the worst-case communication bounds from $\Theta(n)$ to $\tilde{O}(v)$.
  From a practical perspective, we demonstrate that $v$ can be small in practice
  by proving that $v$ is $O(\log f(n))$ for monotone streams and $o(n)$ for
  streams that are ``nearly'' monotone or that are generated by random walks. We
  expect $v$ to be $o(n)$ for many other interesting input classes as well.

\end{abstract}

\section{Introduction}
\label{sec:introduction}

In the distributed monitoring model, there is a single central monitor and
several ($k$) observers. The observers receive data and communicate with the
monitor, and the goal is to maintain at the monitor a summary of the data
received at the observers while minimizing the communication between them.

This model was introduced by Cormode, Muthukrishnan, and Yi \cite{CMY2008}
\cite{CMY2011} with the motivating application of minimizing radio energy usage
in sensor networks, but can be applied to other distributed applications like
determining network traffic patterns. Since the monitor can retain all messages
received, algorithms in the model can be used to answer historical queries too,
making the model useful for auditing changes to and verifying the integrity of
time-varying datasets.

The distributed monitoring model has also yielded several theoretical results.
These include algorithms and lower bounds for tracking total count
\cite{CMY2008} \cite{CMY2011} \cite{LRV2011} \cite{LRV2012},
%
%
frequency moments \cite{CMY2008} \cite{CMY2011} \cite{WZ2011} \cite{WZ2012},
item frequencies \cite{HYZ2012} \cite{WZ2011} \cite{WZ2012} \cite{YZ2009}
\cite{YZ2013}, quantiles \cite{HYZ2012} \cite{WZ2011} \cite{WZ2012}
\cite{YZ2009} \cite{YZ2013}, and entropy \cite{ABC2009} \cite{WZ2011}
\cite{WZ2012} to small relative error.

However, nearly all of the upper bounds assumed that data is only inserted and
never deleted. This is unfortunate because in the standard turnstile streaming
model, all of these problems have similar algorithms that permit both insertions
and deletions. In general, this unfortunate situation is unavoidable; existing
lower bounds for the distributed model \cite{ABC2009} demonstrate that it is not
possible to track even the total item count in small space when data is
permitted to be deleted.

That said, when restrictions are placed on the types of allowable input, the
lower bounds evaporate, and very nice upper bounds exist. Tao, Yi, Sheng, Pei,
and Li \cite{TYSPL2010} developed algorithms for the problem of summarizing the
order statistics history of a dataset $D$ over an insertion/deletion stream of
size $n$, which has an $\Omega(n)$-bit lower bound in general; however, they
performed an interesting analysis that yielded online and offline upper bounds
proportional to $\sum_{t=1}^n 1 / |D(t)|$, with a nearly matching lower bound. A
year or two later, Liu, Radunovi{\'c}, and Vojnovi{\'c} \cite{LRV2011}
\cite{LRV2012} considered the problem of tracking $|D|$ under random inputs; for
general inputs, there is an $\Omega(n)$-bit lower bound, but Liu et. al.
obtained (among other results) expected communication costs proportional to
$\sqrt{n} \log n$ when the insertion/deletion pattern is the result of fair coin
flips.

In fact, the pessimistic lower bounds for the general case can occur only when
the input stream is such that the quantity being tracked is forced to vary
quickly. In the problems considered by Tao et. al. and Liu et. al., this occurs
when $|D|$ is usually small. These two groups avoid this problem in two
different ways: Tao et. al. provide an analysis that yields a worst-case upper
bound that is small when $|D|$ is usually large, and Liu et. al. consider input
classes for which $|D|$ is usually large in expectation.

\paragraph{Our contributions}
In this paper we propose a framework that extends the analysis of Tao et. al. to
the distributed monitoring model and that permits worst-case analysis that can
be specialized for random input classes considered by Liu et. al. In so doing,
we explain the intuition behind the factor of $\sum_{t=1}^n 1 / |D(t)|$ in the
bounds of Tao et. al. and how we can separate the different sources of
randomness that appear in the algorithms of Liu et. al. to obtain worst-case
bounds for the random input classes we also consider.


In the next section we derive a stream parameter, the variability $v$. We prove
that $v$ is $O(\log f(n))$ for monotone streams and $o(n)$ for streams that are
``nearly'' monotone or that are generated by random walks, and find that the
bounds of Tao et. al. and Liu et. al. are stated nicely in terms of $v$.
In section \ref{sec:upperbounds} we combine ideas from the upper bounds of Tao
et. al. \cite{TYSPL2010} with the existing distributed counting algorithms of
Cormode et. al. \cite{CMY2008} \cite{CMY2011} and Huang, Yi, and Zhang
\cite{HYZ2012} to obtain upper bounds for distributed counting that are
proportional to $v$.
In section \ref{sec:lowerbounds} we show that our dependence on $v$ is
essentially necessary by developing deterministic and randomized
space+communication lower bounds that hold even when $v$ is small.
We round out the piece in section \ref{sec:framework} with a discussion of the
suitability of variability as a general framework, in which we extend the ideas
of section \ref{sec:upperbounds} to the problems of distributed tracking of item
frequencies and of tracking general aggregates when $k = 1$.

But before we jump into the derivation of variability, we define our problem
formally and abstract away unessential details.

\paragraph{Problem definition}
The problem is that of tracking at the coordinator an integer function $f(n)$
defined by an update stream $f'(n)$ that arrives online at the sites. Time
occurs in discrete steps; to be definite, the first timestep is $1$, and we
define $f(0) = 0$ unless stated otherwise. At each new current time $n$ the
value $f'(n) = f(n) - f(n \m 1)$ appears at a single site $i(n)$.

There is an error parameter $\ep$ that is specified at the start. The
requirement is that, after each timestep $n$, the coordinator must have an
estimate $\fhat(n)$ for $f(n)$ that is usually good. In particular, for
deterministic algorithms we require that $\forall n,\; |f(n) \m \fhat(n)| \le
\ep f(n)$, and for randomized algorithms we require that $\forall n,\; P(|f(n)
\m \fhat(n)| \le \ep f(n)) \:\ge\: 2/3$.

\section{Variability}
\label{sec:variability}

In the original distributed monitoring paper \cite{CMY2008}, Cormode et. al.
define a general thresholded problem $(k, f, \tau, \ep)$. A dataset $D$ arrives
as a distributed stream across $k$ sites. At any given point in time, the
coordinator should be able to determine whether $f(D) \ge \tau$ or $f(D) \le (1
\m \ep) \tau$.

In continuous tracking problems, there is no single threshold, and so $f(n)$ is
tracked to within an additive $\ep \tau(n)$, where $\tau(n)$ also changes with
the dataset $D(n)$. Since $\tau$ is now a function, it needs to be defined; the
usual choice is $f$ itself, except for tracking item frequencies and order
statistics, for which (following the standard streaming model) $\tau$ is chosen
to be $|D|$. That is, the continuous monitoring problem $(k, f, \ep)$ is, at all
times $n$ maintain at the coordinator an estimate $\fhat(n)$ of $f(n)$ so that
$|f(n) \m \fhat(n)| \le \ep f(n)$.

The motivation for the way we define variability is seen more easily if we first
look at the situation as though item arrivals and communication occur
continuously. That is, over $n = [0.1, 0.2]$ we receive the second tenth of the
first item, for example. At any time $t$ at which $f$ changes by $\pm \ep f$, we
would need to communicate at least one message to keep the coordinator in sync;
so if $f$ changes by $f'(t) \, dt$ then we should communicate $|\frac{f'(t) \,
  dt}{\ep f(t \!+\! dt)}|$ messages.

With discrete arrivals, $dt = 1$, and we define $f'(t) = f(t) - f(t \m 1)$.
Otherwise, the idea remains the same, so we would expect the total number of
messages to look like $\sum_{t = 1}^n |\frac{f'(t)}{\ep f(t)}|$, where here
$f'(t) = f(t) - f(t \m 1)$. In sections \ref{sec:upperbounds} and
\ref{sec:lowerbounds} we find that, modulo the number $k$ of sites and constant
factors, this is indeed the case.

Being a parameter of the problem rather than of the stream, we can move the
$1/\ep$ factor out of our definition of variability and bring it back in along
with the appropriate functions of $k$ when we state upper and lower bounds for
our problem. This permits us to treat the stream parameter $v$ independently of
the problem. We also need to handle the case $f = 0$ specially, which we can do
by communicating at each timestep that case occurs. This means we can define
$|\frac{f'(t)}{f(t)}| = 1$ when $f(t) = 0$.

Taking all of these considerations into account, we define the
\emph{$f$-variability} of a stream to be $v(n) = \sum_{t = 1}^n \min \{1,
|\frac{f'(t)}{f(t)}|\}$. We also write $v'(t) = \min \{1,
|\frac{f'(t)}{f(t)}|\}$ to be the increase in variability at time $t$. We say
``variability'' for $f$-variability in the remainder of this paper.

From a practical perspective, we believe low variability streams to be common.
In many database applications the database is interesting primarily because it
tends to grow more than it shrinks, so it is common for the size of the dataset
to have low variability; as more items are inserted, the rate of change of $|D|$
shrinks relative to itself, and about as many deletions as insertions would be
required to keep the ratio constant. In the following subsection, we prove that
monotone and nearly monotone functions have low variability and that random
walks have low variability in expectation, lending evidence to our belief.

From a theoretical perspective, variability is a way to analyze algorithms for
$\ep$ relative error in the face of non-monotonicity and generate provable
worst-case bounds that degrade gracefully as our assumptions about the input
become increasingly pessimistic. For our counting problem, it allows us to adapt
the existing distributed counting algorithms of Cormode et. al. \cite{CMY2008}
\cite{CMY2011} and Huang et. al. \cite{HYZ2012} with only minor modifications,
and the resulting analyses show that the dependence on $k$ and $\ep$ remains
unchanged.

\subsection{Interesting cases with small variability}
\label{sec:variability-smallvar}

We start with functions that are nearly monotone in the sense that they are
eventually mostly nondecreasing. We make this precise in the theorem statement.

\begin{theorem}
  \label{thm:smallvar-cases-nearlymonotone}
  Let $f^-(n) = \sum_{t : f'(t) < 0} |f'(t)|$ and $f^+(n) = \sum_{t : f'(t) > 0}
  f'(t)$. If there is a monotone nondecreasing function $\beta(t) \ge 1$ and a
  constant $t_0$ such that for all $n \ge t_0$ we have $f^-(n) \le \beta(n)
  f(n)$, then the variability $\sum_{t=1}^n |f'(t)/f(t)|$ is $O(\beta(n)
  \log(\beta(n) f(n)))$.
\end{theorem}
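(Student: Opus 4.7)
My plan is to split the variability sum as $V = V_+ + V_-$, separating the contributions from up-steps and down-steps, and to bound each half by replacing $1/f(t)$ with a bound expressed in terms of the monotone running totals $f^+(t)$ and $f^-(t)$. Two inequalities, valid for all $t \ge t_0$, do the heavy lifting. The hypothesis $f^-(t) \le \beta(t) f(t)$ directly gives $f(t) \ge f^-(t)/\beta(t)$. Combining it with the identity $f^+(t) = f(t) + f^-(t)$ yields $f(t)(1+\beta(t)) \ge f^+(t)$, i.e.\ $f(t) \ge f^+(t)/(1+\beta(t))$. Using monotonicity of $\beta$ to replace $\beta(t)$ by $\beta(n)$ makes both inequalities uniform over $t \in [t_0, n]$.

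For the negative part, this gives $|f'(t)|/f(t) \le \beta(n) \cdot |f'(t)|/f^-(t)$ at every down-step. Since $f^-$ is monotone nondecreasing and changes only at down-steps, with increase at step $t$ exactly $|f'(t)|$, the sum $\sum_{f'(t)<0} |f'(t)|/f^-(t)$ is a telescoping sum of the form $\sum (a_i - a_{i-1})/a_i$, which is at most $1 + \log(a_m/a_1) \le 1 + \log f^-(n)$ via the standard estimate $(a-a')/a \le \log(a/a')$ (the leading $1$ absorbs the first jump from $a_0 = 0$). Hence $V_- \le \beta(n)(1 + \log f^-(n)) = O(\beta(n) \log(\beta(n) f(n)))$, using $f^-(n) \le \beta(n) f(n)$.

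The positive part is symmetric: $f'(t)/f(t) \le (1+\beta(n)) f'(t)/f^+(t)$ at every up-step, and the inner sum telescopes in the same way to $1 + \log f^+(n)$. Since $f^+(n) = f(n) + f^-(n) \le (1+\beta(n)) f(n)$, the same $O(\log(\beta(n) f(n)))$ bound holds, giving $V_+ = O(\beta(n) \log(\beta(n) f(n)))$. Adding the two contributions gives the theorem.

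The bookkeeping details I would handle last are the initial segment and the convention $|f'/f| = 1$ when $f = 0$. The first $t_0$ timesteps contribute at most $t_0 = O(1)$ because $v'(t) \le 1$ after the min-with-one cap. Moreover, once $t \ge t_0$ and $f(t) = 0$, the hypothesis forces $f^-(t) = 0$, hence $f^+(t) = 0$, so $f$ was identically zero up to time $t$; consequently, once $f$ first becomes positive at some $t^* \ge t_0$ it stays positive forever, and the divisions $|f'(t)|/f(t)$ are well-defined on $[t^*, n]$. The only real conceptual step is noticing the ``positive'' lower bound $f(t) \ge f^+(t)/(1+\beta(t))$; once both lower bounds are in hand, the rest is elementary telescoping.
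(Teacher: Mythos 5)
Your proposal is correct, and it reaches the bound by a different decomposition than the paper's. Both arguments hinge on the same key observation you flag as the ``only real conceptual step'': the hypothesis plus the identity $f^+(t) = f(t) + f^-(t)$ gives $f(t) \ge f^+(t)/(1+\beta(t))$, so the denominator $f(t)$ can be replaced by a monotone nondecreasing quantity. Where you diverge is in how the resulting harmonic-type sum is controlled. The paper partitions $[t_0, n]$ into epochs over which $f^+$ doubles, notes there are $O(\log f^+(n))$ of them, and bounds the total absolute variation $f^+ + f^-$ within each epoch by a constant multiple of $f^+$ at the epoch's start, so each epoch contributes $O(\beta(n))$. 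You instead split the sum by the sign of $f'(t)$, lower-bound $f(t)$ by $f^-(t)/\beta(t)$ on down-steps and by $f^+(t)/(1+\beta(t))$ on up-steps, and then bound each half by the elementary telescoping estimate $\sum_i (a_i - a_{i-1})/a_i \le 1 + \log a_m$ applied to the monotone running totals. Your route is somewhat cleaner: it avoids the epoch bookkeeping entirely (the paper's displayed chain of inequalities is in fact a bit sloppy about which of $f^+(t_{i-1})$ and $f(t_{i-1})$ sits in the denominator, which your version sidesteps), and the telescoping step is robust to arbitrary jump sizes $|f'(t)| > 1$ without further comment. What the paper's version buys is that it makes explicit the ``constant variability per doubling epoch'' structure that is reused in the block partition of Section~\ref{sec:upperbounds}. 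Your handling of the edge cases (the $O(t_0)$ initial segment via the cap $v'(t)\le 1$, and the observation that $f(t)=0$ for $t \ge t_0$ forces $f \equiv 0$ up to $t$, so $f$ stays positive once it becomes positive) is also sound and is actually more careful than what the paper writes down.
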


The proof, which we defer to appendix
\ref{sec:appendix-variability-nearlymonotone}, partitions time into intervals
over which $f^+(t)$ doubles and shows the variability in each interval to be
$O(\beta(n))$. When $f(n)$ is strictly monotone, $\beta(n) = 1$ suffices, and
the theorem reduces to the result claimed in the abstract. As we will see in
section \ref{sec:upperbounds}, our upper bounds will simplify in the monotone
case to those of Cormode et. al. \cite{CMY2008} \cite{CMY2011} and Huang et. al.
\cite{HYZ2012}.

Next, we compute the variability for two random input classes considered by Liu
et. al. \cite{LRV2011} \cite{LRV2012}. This will permit us to decouple the
randomness of their algorithms from the randomness of their inputs. This means,
for example, that even our deterministic algorithm of section
\ref{sec:upperbounds} has $o(n)$ cost in expectation for these input classes.
The first random input class we consider is the symmetric random walk.

\begin{theorem}
  \label{thm:smallvar-cases-unbiased}
  If $f'(t)$ is a sequence of i.i.d. $\pm 1$ coin flips then the expected
  variability $E(v(n)) = O(\sqrt{n} \log n)$.
\end{theorem}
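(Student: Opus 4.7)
The plan is to compute $E(v(n))$ by linearity: $E(v(n)) = \sum_{t=1}^n E(v'(t))$, where $v'(t) = \min\{1, 1/|f(t)|\}$ (using the convention $v'(t) = 1$ when $f(t) = 0$), since $|f'(t)| = 1$ identically. So the whole game reduces to bounding $E[\min\{1, 1/|f(t)|\}]$ for a simple symmetric random walk at time $t$, and then summing from $1$ to $n$.

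First I would split the expectation by conditioning on the value of $|f(t)|$:
\[
E(v'(t)) \;=\; P(|f(t)| \le 1) \;+\; \sum_{k \ge 2} \frac{P(|f(t)| = k)}{k}.
\]
Next I would invoke the standard local-limit/Stirling estimate for the binomial, which gives $P(f(t) = j) = O(1/\sqrt{t})$ uniformly in $j$ (with the obvious parity restriction). The first term is then $O(1/\sqrt{t})$. For the sum, I would split at $k = \sqrt{t}$: for $k \le \sqrt{t}$, use the uniform bound $P(|f(t)|=k) = O(1/\sqrt{t})$ to get $\sum_{k=2}^{\sqrt{t}} 1/k \cdot O(1/\sqrt{t}) = O(\log t / \sqrt{t})$; for $k > \sqrt{t}$, bound $1/k \le 1/\sqrt{t}$ and use $\sum_k P(|f(t)|=k) \le 1$ to get an additional $O(1/\sqrt{t})$. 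Combining, $E(v'(t)) = O(\log t / \sqrt{t})$.

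Finally, summing over $t$: $\sum_{t=1}^n \log t / \sqrt{t} = O(\sqrt{n} \log n)$ by integral comparison, which gives the claimed bound $E(v(n)) = O(\sqrt{n}\log n)$.

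The only genuinely delicate step is the local-limit estimate $P(f(t) = j) = O(1/\sqrt{t})$; everything else is a one-line harmonic-sum calculation. I expect the main obstacle to be keeping the parity bookkeeping clean (the walk sits on even integers at even times and odd integers at odd times), but this just shifts constants and does not affect the rate. As a sanity check, the dominant contribution to $E(v'(t))$ comes from the $k \le \sqrt{t}$ regime, which matches the intuition that $v'(t)$ is noticeably large precisely when the walk is near the origin --- an event of probability $\Theta(1/\sqrt{t})$ per unit integer.
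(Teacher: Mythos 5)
Your proposal is correct and follows essentially the same route as the paper: linearity of expectation, the uniform local-limit bound $P(f(t)=s)=O(1/\sqrt{t})$, a harmonic-sum estimate giving $O(\log t/\sqrt{t})$ per timestep, and an integral comparison for the final $O(\sqrt{n}\log n)$. Your extra split at $k=\sqrt{t}$ is harmless but unnecessary, since applying the uniform bound over all $k\le t$ already yields the same $\log$ factor, as the paper does.
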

\begin{proof}
  The update sequence defines a random walk for $f(t)$, and the expected
  variability is
  \begin{equation*}
    \sum_{t=1}^n P(f(t) \!=\! 0) \;+\; \sum_{t=1}^n \sum_{s=1}^t 2 P(f(t) \!=\! s) / s
  \end{equation*}

  We use the following fact, mentioned and justified in Liu et. al.
  \cite{LRV2011}:
  \begin{fact}
    \label{thm:smallvar-cases-lemma}
    For any $t \ge 1$ and $s \in [-t, t]$ we have $P(f(t) \!=\! s) \le c_1 /
    \sqrt{t}$, where $c_1$ is some constant.
  \end{fact}

  Together, these show the expected cost to be at most
  \begin{equation*}
    c_1 \sum_{t=1}^n (1 + 2 H_t) / \sqrt{t}  \;\le\;
    c_2 \log(n) \sum_{t=1}^n 1 / \sqrt{t}  \;\le\;
    c_3 \log(n) \sqrt{n}
  \end{equation*}
  since $(1 + 2 H_n) \le \frac{c_2}{c_1} \log(n)$ and $\sum_{t=1}^n 1 / \sqrt{t}
  \le \frac{c_3}{2 c_2} \int_1^n 1 / \sqrt{t} \; dt$.
\end{proof}

The second random input class we consider is i.i.d. increments with a common
drift rate of $\mu > 0$. The case $\mu < 0$ is symmetric. We assume that $\mu$
is constant with respect to $n$. The proof is a simple application of Chernoff
bounds and is deferred to appendix \ref{sec:appendix-variability-biased}.

\begin{theorem}
  \label{thm:smallvar-cases-biased}
  If $f'(t)$ is a sequence of i.i.d. $\pm 1$ random variables with $P(f'(t)
  \!=\! 1) = (1+\mu)/2$ then $E(v(n)) = O(\frac{\log n}{\mu})$.
\end{theorem}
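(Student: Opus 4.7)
The plan is to exploit concentration: since $|f'(t)| = 1$ we have $v'(t) = \min\{1, 1/|f(t)|\}$, and the drift forces $f(t)$ to be close to $\mu t$ for large $t$, so typical steps contribute only $O(1/(\mu t))$ to the variability.

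First I would condition on the "typical" event $A_t = \{f(t) \ge \mu t/2\}$. On $A_t$ we have $v'(t) \le 2/(\mu t)$, and on the complement we use the trivial bound $v'(t) \le 1$. A standard multiplicative Chernoff bound for the sum of i.i.d.\ $\pm 1$ variables with mean $\mu$ per step gives $P(\bar{A}_t) \le \exp(-c\mu^2 t)$ for an absolute constant $c > 0$; this is the quantitative input we need.

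Then I would split the expectation as
\begin{equation*}
  E(v(n)) \;\le\; \sum_{t=1}^n \min\!\left\{1, \frac{2}{\mu t}\right\} \;+\; \sum_{t=1}^n P(\bar{A}_t).
\end{equation*}
The first sum is the dominant term: the threshold between the two branches of the $\min$ occurs at $t \approx 2/\mu$, contributing $O(1/\mu)$ from small $t$ and $\sum_{t > 2/\mu}^n 2/(\mu t) = O(\log(n)/\mu)$ from large $t$. The second sum is bounded by the geometric series $\sum_{t \ge 1} \exp(-c\mu^2 t) = O(1/\mu^2)$, which is $O(1)$ since $\mu$ is a constant independent of $n$. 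Adding the two gives $E(v(n)) = O(\log(n)/\mu)$.

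There is no real obstacle here; the only subtlety is making sure the Chernoff bound is stated for the right "one-sided" event (namely $f(t) < \mu t/2$, a multiplicative deviation of $1/2$ below the mean of a sum with spread $\mu t$), and verifying that the tail-event contribution is absorbed into the main term because $\mu$ is treated as constant. If one wanted a bound uniform in $\mu$ the tail sum $O(1/\mu^2)$ would become the bottleneck and would have to be tightened, but the theorem as stated sidesteps this.
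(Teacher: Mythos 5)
Your proof is correct and follows essentially the same route as the paper's: a Chernoff bound showing $f(t) \ge \mu t/2$ is the typical case, a trivial bound of $1$ on the atypical event, and a harmonic sum $\sum_t 2/(\mu t) = O(\log(n)/\mu)$ for the main term. The only (cosmetic) difference is bookkeeping: you split the expectation timestep by timestep and sum the failure probabilities as a geometric series $O(1/\mu^2)$, whereas the paper takes a union bound over all $t \ge t_0$ with $t_0 = \Theta(\frac{1}{\mu}\log\frac{n}{\mu})$ and charges the initial segment $t_0$ directly; both absorb the error terms since $\mu$ is constant.
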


\paragraph{Remarks}
We can restate the results of Liu et. al. \cite{LRV2011} \cite{LRV2012} and Tao
et. al. \cite{TYSPL2010} in terms of variability. For unbiased coin flips, Liu
et. al. obtain an algorithm that uses $O(\frac{\sqrt{k}}{\ep} \sqrt{n} \log n)$
messages (of size $O(\log n)$ bits each) in expectation, and for biased coin
flips with constant $\mu$, an algorithm that uses $O(\frac{\sqrt{k}}{\ep}
\frac{1}{|\mu|} (\log n)^{1+c})$ messages in expectation. If we rewrite these
bounds in terms of expected variability, they become $O(\frac{\sqrt{k}}{\ep}
E(v(n)))$ and $O(\frac{\sqrt{k}}{\ep} (\log n)^c E(v(n)))$, respectively. In the
next section, we obtain (when $k = O(1/\ep^2)$) a randomized bound of
$O(\frac{\sqrt{k}}{\ep} v(n))$. In marked contrast to the bounds of Liu et. al.,
our bound is a worst-case lower bound that is a function of $v(n)$; if the input
happens to be generated by fair coin flips, then our expected cost happens to be
$O(\frac{\sqrt{k}}{\ep} \sqrt{n} \log n)$.

The results of Tao et. al. are for a different problem, but they can still be
stated nicely in terms of the $|D|$-variability $v(n)$: for the problem of
tracking the historical record of order statistics, they obtain a lower bound of
$\Omega(\frac{1}{\ep} v(n))$ and offline and online upper bounds of
$O((\frac{1}{\ep} \log^2 \frac{1}{\ep}) v(n))$ and $O(\frac{1}{\ep^2} v(n))$,
respectively. We adapt ideas from both their upper and lower bounds in sections
\ref{sec:upperbounds} and \ref{sec:lowerbounds}.

\section{Upper bounds}
\label{sec:upperbounds}

In this section we develop deterministic and randomized algorithms for
maintaining at the coordinator an estimate $\fhat(n)$ for $f(n)$ that is usually
good. In particular, for deterministic algorithms we require that $\forall n,\;
|f(n) \m \fhat(n)| \le \ep f(n)$, and for randomized algorithms that $\forall
n,\; P(|f(n) \m \fhat(n)| \le \ep f(n)) \:\ge\: 2/3$. We obtain deterministic
and randomized upper bounds of $O(\frac{k}{\ep} v(n))$ and $O((k +
\frac{\sqrt{k}}{\ep}) v(n))$ messages, respectively. For comparison, the
analogous algorithms of Cormode et. al. \cite{CMY2008} \cite{CMY2011} and Huang
et. al. \cite{HYZ2012} use $O(\frac{k}{\ep} \log n)$ and $O((k +
\frac{\sqrt{k}}{\ep}) \log n)$ messages, respectively.

For our upper bounds we assume that $f'(n) = \pm 1$ always. If $|f'(n)| > 1$ we
could simulate it with $|f'(n)|$ arrivals of $\pm 1$ updates with $O(\log \max
f'(n))$ overhead, as shown in appendix \ref{sec:appendix-upperbound-singleitem}.

\subsection{Partitioning time}
\label{sec:upperbounds-block}

We use an idea from Tao et. al. \cite{TYSPL2010} to first divide time into
manageable blocks. At the end of each block we know the values $n$ and $f(n)$
exactly. Within each block, we know these values only approximately. The
division into blocks is deterministic and the same for both our deterministic
and randomized algorithms. Our division ensures that the change in $v(n)$ over
each block is at least $1/5$, which simplifies our analysis.

\begin{itemize*}
\item The coordinator requests the sites' values $c_i$ and $f_i$ at times $n_0
  \eq 0, n_1, n_2, \ldots$ and then broadcasts a value $r$. These values will be
  defined momentarily.
\item Each site $i$ maintains a variable $c_i$ that counts the number of stream
  updates $f'(n)$ it received since the last time it sent $c_i$ to the
  coordinator. It also maintains $f_i$ that counts the change in $f$ it received
  since the last broadcast $n_j$. Whenever $c_i = \lceil 2^{r-1} \rceil$, site
  $i$ sends $c_i$ to the coordinator. This is in addition to replying to
  requests from the coordinator.
\item The coordinator maintains a variable $\that$. After broadcasting $r$,
  $\that$ is reset to zero. Whenever site $i$ sends $c_i$, the coordinator
  updates $\that = \that + c_i$.
\item The coordinator also maintains variables $\fhat$, $j$, and $t_j$. At the
  first time $n_j > n_{j-1}$ at which $\that \ge t_j$, the coordinator requests
  the $c_i$ and $f_i$ values, updates $\fhat$ and $r$, sets $t_{j+1} = \lceil
  2^{r-1} \rceil k$, broadcasts $r$, and increments $j$.
\item When $r$ is updated at the end of time $n_j$, it is set to $r$ if $2^r 2 k
  \le |f(n_j)| < 2^r 4 k$ and zero if $|f(n_j)| < 4 k$.
\end{itemize*}

Thus we divide time into blocks $B_0, B_1, \ldots$, where $B_j = [n_j+1,
  n_{j+1}]$. Algebra tells us some facts:
\begin{itemize*}
\item $\lceil 2^{r-1} \rceil k \;\le\; n_{j+1} - n_j \;\le\; 2^r k$.
\item If $r = 0$ then $|f(n) - f(n_j)| \le k$ and $|f(n)| \le 5 k$ for all $n$
  in $B_j$.
\item If $r \ge 1$ then $|f(n) - f(n_j)| \le 2^r k$ and $2^r k \le |f(n)| \le
  2^r 5 k$ for all $n$ in $B_j$.
\end{itemize*}

The total number of messages sent in block $B_j$ is at most $5 k$: we have at
most $2 k$ updates from sites, $k$ in requests from the coordinator, $k$ replies
from each site, and $k$ broadcast at $n_{j+1}$.

The change in variability $v_j = v(n_{j+1}) - V(n_j)$ over block $B_j$ is
\begin{align*}
  v(n_{j+1}) - v(n_j)
  \;&=\; \sum_{t = n_j+1}^{n_{j+1}} \frac{1}{\min\{1, |f(t)|\}}
  \;\ge\;
  \left\{
  \begin{matrix}
    k / 5 k  &  \text{ if $r = 0$}   \\
    2^r k / 2^r 5 k  &  \text{ if $r \ge 1$}
  \end{matrix}
  \right\}
  \;\ge\; 1/5
\end{align*}
And therefore the total number of messages (all $O(\log n)$ bits in size) is
bounded by $25 k v + 3 k$.

\subsection{Estimation inside blocks}

What remains is to estimate $f(n)$ within a given block. Since we have
partitioned time into constant-variability blocks, we can use the algorithms of
Cormode et. al. \cite{CMY2008} \cite{CMY2011} and Huang et. al. \cite{HYZ2012}
almost directly. Both of our algorithms use the following template, changing
only \cnd, \msg, and \upd:

\begin{itemize*}
\item Site $i$ maintains a variable $d_i$ that tracks the drift at site $i$,
  defined as the sum of $f'(n)$ updates received at site $i$ during the block.
  That is, $f(n) - f(n_j) = \sum_i d_i$.
\item Site $i$ also maintains a variable $\delta_i$ that tracks the change in
  $d_i$ since the last time site $i$ sent a \msg. $\delta_i$ is initially zero.
\item The coordinator maintains an estimate $\dhat_i$ for each value $d_i$.
  These are initially zero. It also defines two estimates based on these
  $\dhat_i$:
  \begin{itemize*}
  \item For the global drift: $\dhat = \sum_i \dhat_i$.
  \item For $f(n)$: $\fhat(n) = f(n_j) + \dhat(n)$.
  \end{itemize*}
\item When site $i$ receives stream update $f'(n)$, it updates $d_i$. It then
  checks its \cnd. If true, it sends a \msg{} to the coordinator and resets
  $\delta_i = 0$.
\item When the coordinator receives a \msg{} from a site $i$ it \upd{}s its
  estimates.
\end{itemize*}

\subsection{The deterministic algorithm}

Our method guarantees that at all times $n$ we have $|f(n) - \fhat(n)| \le \ep
|f(n)|$. It uses $O(kv/\ep)$ messages in total.

\begin{itemize*}
\item \Cnd: true if $|\delta_i| = 1$ and $r = 0$, or if $|\delta_i| \ge \ep
  2^r$. Otherwise, false.
\item \Msg: the new value of $d_i$.
\item \Upd: set $\dhat_i = d_i$.
\end{itemize*}

Let $\delta = \sum_i \delta_i$ be the error with which $\dhat$ estimates $d = \sum_i
d_i$. The error in $\fhat$ is
\begin{align*}
  |f(n) - \fhat(n)|
  \;&=\; |(f(n_j) + d(n)) \,-\, (\fhat(n_j) + d(n) + \delta(n))|
  \;=\; |\delta(n)|
\end{align*}
When $r \ge 1$ we have $|B_j| \le 2^r k$, and we always have that $\delta \le
|B_j|$. Since we constrain $\delta_i < \ep 2^r$ at the end of each timestep, we
have $|f(n) - \fhat(n)| \;<\; \ep 2^r k \;\le\; \ep |f(n)|$.

We also use at most $2 k / \ep$ messages for the block. If $r = 0$ then the
number of messages is at most $k$. If $r \ge 1$, then since a site must receive
$\ep 2^r$ new stream updates to send a new message, and since there are at most
$2^r k$ stream updates in the block, there are at most $k/\ep$ messages.

In each block the change in $v$ is at least $1/5$, so the total number of
messages is at most $5 k v / \ep$.

\subsection{The randomized algorithm}
\label{sec:upperbounds-rand}

Our method uses $O(\sqrt{k}v/\ep)$ messages (plus the time partitioning) and
guarantees that at all times $n$ we have $P(|f(n) - \fhat(n)| > \ep |f(n)|)
\;<\; 1/3$.

The idea is to estimate the sums $d_i^+$ and $d_i^-$ separately. The estimators
for those values are independent and monotone, so we can use the method of Huang
et. al. \cite{HYZ2012} to estimate the two and then combine them.

Specifically, the coordinator and each site run two independent copies $A^+$ and
$A^-$ of the algorithm. Whenever $f'(n) = +1$ arrives at site $i$, a $+1$ is fed
into algorithm $A^+$ at site $i$. Whenever $f'(n) = -1$ arrives at site $i$, a
$+1$ is fed into algorithm $A^-$ at site $i$. So the drifts $d_i^+$ and $d_i^-$
at every site will always be nonnegative. At the coordinator, the estimates
$\dhat_i^{\pm}$ and $\dhat^{\pm}$ are tracked independently also. However, the
coordinator also defines $\dhat = \dhat^+ - \dhat^-$ and $\fhat(n) = f(n_j) +
\dhat(n)$. The definitions for algorithm $A^{\pm}$ are

\begin{itemize*}
\item \Cnd: true with probability $p = \min\{1, 3 / \ep 2^r k^{1/2}\}$.
\item \Msg: the new value of $d_i^{\pm}$.
\item \Upd: set $\dhat_i^{\pm} = d_i^{\pm} - 1 + 1/p$.
\end{itemize*}

The following fact \ref{fac:upperbounds-rand-proof-hyz} is lemma 2.1 of Huang
et. al. \cite{HYZ2012}. Our algorithm effectively divides the stream $f'(B_j)$
into two streams $|f'(B_j^{\pm})|$. Since these streams consist of $+1$
increments only we run the algorithm of Huang et. al. separately on each of
them. At any time $n$, stream $|f'(B_j^{\pm})|$ has seen $d_i^{\pm}(n)$
increments at site $i$, and lemma 2.1 of Huang et. al. guarantees that the
estimates $\dhat_i^{\pm}(n)$ for the counts $d_i^{\pm}(n)$ are good.
\begin{fact}
  \label{fac:upperbounds-rand-proof-hyz}
  $E(\dhat_i^{\pm}) = d_i^{\pm}$ and $\Var(\dhat_i^{\pm}) \le 1/p^2$.
\end{fact}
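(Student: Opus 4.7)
The plan is to track the error $Z_d := \dhat_i^{\pm} - d_i^{\pm}$ as the per-site count $d_i^{\pm}$ advances from $0$ to its current value. Because messages in $A^{\pm}$ are triggered by an independent Bernoulli$(p)$ coin at each $+1$ update, $\dhat_i^{\pm}$ admits a clean one-step recursion: when the $d$th increment arrives, either (with probability $p$) a message is sent and the coordinator resets $\dhat_i^{\pm}$ to $d_i^{\pm} \m 1 + 1/p$, which sets $Z_d = 1/p \m 1$ \emph{independently of the past}, or (with probability $1 \m p$) no message is sent, $\dhat_i^{\pm}$ is unchanged, and $Z_d = Z_{d-1} \m 1$. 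Starting from $Z_0 = 0$, both assertions of the fact follow from analyzing this recursion.

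\paragraph{Moments.}
Conditioning on $Z_{d-1}$ gives $E(Z_d \mid Z_{d-1}) = (1 \m p)(Z_{d-1} \m 1) + p(1/p \m 1) = (1 \m p) Z_{d-1}$, so iterating yields $E(Z_d) = 0$ and hence $E(\dhat_i^{\pm}) = d_i^{\pm}$. The conditional second moment is $E(Z_d^2 \mid Z_{d-1}) = (1 \m p)(Z_{d-1} \m 1)^2 + p(1/p \m 1)^2$; expanding the square, taking unconditional expectation, and substituting $E(Z_{d-1}) = 0$ kills the cross term and combines the constants into $(1 \m p)/p$, producing the closed recurrence $E(Z_d^2) = (1 \m p) E(Z_{d-1}^2) + (1 \m p)/p$. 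Telescoping from $Z_0 = 0$ gives $E(Z_d^2) = (1 \m p)\bigl(1 \m (1 \m p)^d\bigr)/p^2 \le 1/p^2$, and $\Var(\dhat_i^{\pm}) = E(Z_d^2)$ because $Z_d$ is centered.

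\paragraph{Where the work is.}
There is no deep obstacle here: the whole argument rests on the observation that a message event acts as a \emph{complete reset} of $Z$ to the deterministic value $1/p \m 1$, decoupling the post-message distribution from the past and collapsing both moment equations into elementary linear geometric recurrences. The only point requiring care is the cross-term cancellation in the second-moment step, which uses the unbiasedness just established one line earlier; without it, a spurious $O(d)$ term would survive and spoil the $1/p^2$ bound.
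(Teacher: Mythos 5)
Your proof is correct, but note that the paper does not prove this fact at all: it is imported verbatim as Lemma~2.1 of Huang, Yi, and Zhang \cite{HYZ2012}, applied separately to each of the monotone streams $|f'(B_j^{\pm})|$. So what you have done is supply a self-contained derivation of the cited lemma. Your route differs from the original one in Huang et~al., who analyze the estimator directly by observing that the number of increments since the last message is a (truncated) geometric random variable and computing the mean and variance of the resulting error distribution; you instead set up the one-step Markov recursion for the error $Z_d = \dhat_i^{\pm} - d_i^{\pm}$, using that a message resets $Z$ to the deterministic value $1/p - 1$ independently of the past, and solve the two linear recurrences $E(Z_d) = (1-p)E(Z_{d-1})$ and $E(Z_d^2) = (1-p)E(Z_{d-1}^2) + (1-p)/p$, giving $E(Z_d)=0$ and $E(Z_d^2) = (1-p)\bigl(1-(1-p)^d\bigr)/p^2 \le 1/p^2$. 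The two arguments are essentially equivalent in content and yield the same exact constants (your closed form even recovers the geometric-variance expression $(1-p)/p^2$ in the limit); yours buys a short inductive verification that avoids summing over the truncated geometric distribution and handles the edge cases ($p=1$, no message yet sent) uniformly, while the citation route buys brevity and also covers the further properties of the HYZ estimator that this paper does not need. Two small points worth making explicit if this were to be inserted: the conditioning should formally be on the whole history of coin flips (which is harmless, since the fresh coin is independent of it and $Z_d$ depends only on $Z_{d-1}$ and that coin), and $\Var(\dhat_i^{\pm}) = E(Z_d^2)$ uses that $d_i^{\pm}(n)$ is determined by the (adversarial, non-random) input stream, so all randomness lies in the coins.
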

This means that $E(\dhat^{\pm}) = \sum_i E(\dhat_i^{\pm}) = \sum_i d_i^{\pm}$,
and therefore that $E(\dhat) = \sum_i E(d_i^+ - d_i^-) = \sum_i d_i$. Since the
estimators $\dhat_i^{\pm}$ are independent, the variance of the global drift is
at most $2 k / p^2$. By Chebyshev's inequality,
%
%
\begin{align*}
  P(|\delta(n)| > \ep 2^r k)
  \;&\le\; \frac{2 k / p^2}{(\ep 2^r k)^2}
  \;<\; 1/3
\end{align*}
Further, the expected cost of block $B_j$ is at most
$p |B_j| \le (3 / \ep 2^r k^{1/2}) (2^r 2 k) \le 30 k^{1/2} v_j / \ep$.

\section{Lower bounds}
\label{sec:lowerbounds}

In this section we show that the dependence on $v$ is essentially necessary by
developing deterministic and randomized lower bounds on space+communication that
hold even when $v$ is small. Admittedly, this is not as pleasing as a pure
communication lower bound would be. On the other hand, a distributed monitoring
algorithm with high space complexity would be impractical for monitoring sensor
data, network traffic patterns, and other applications of the model. Note that
in terms of space+communication, our deterministic lower bound is tight up to
factors of $k$, and our randomized lower bound is within a factor of $\log(n)$
of that.

For these lower bounds we use a slightly different problem. We call this problem
the tracing problem. The streaming model for the tracing problem is the standard
turnstile streaming model with updates $f'(n)$ arriving online. The problem is
to maintain in small space a summary of the sequence $f$ so that, at any current
time $n$, if we are given an earlier time $t$ as a query, we can return an
estimate $\fhat(t)$ so that $P(|f(t) \m \fhat(t)| \le \ep f(t))$ is large (one
in the deterministic case, $2/3$ in the randomized case). We call this the
tracing problem because our summary traces $f$ through time, so that we can look
up earlier values.

In appendix \ref{sec:appendix-lowerbounds-lem0} we show that a space lower bound
for the tracing problem implies a space+communication lower bound for the
distributed tracking problem. Here, we develop deterministic and randomized
space lower bounds for the tracing problem.

\subsection{The deterministic bound}

The deterministic lower bound that follows is similar in spirit to the lower
bound of Tao et. al. \cite{TYSPL2010}. It uses a simple information-theoretic
argument.

\begin{theorem}
  \label{thm:lowerbounds-det}
  Let $\ep = 1 / m$ for some integer $m \ge 2$, let $n \ge 2 m$, let $c < 1$
  constant, and let $r \le n^c$ and even. If a deterministic summary $S(f)$
  guarantees, even only for sequences for which $v(n) = \frac{6 m + 9}{2 m + 6}
  \ep r$, that $|f(t) - \fhat(t)| \le \ep f(t)$ for all $t \le n$, then that
  summary must use $\Omega(\frac{\log n}{\ep} v(n))$ bits of space.
\end{theorem}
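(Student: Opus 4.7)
The plan is an information-theoretic counting argument: construct a family $\mathcal{F}$ of valid input streams of length at most $n$, each with variability exactly $v(n)$, such that any two distinct streams in $\mathcal{F}$ force the summary into distinct states. A deterministic summary using $S$ bits can realize at most $2^S$ states, so this forces $S \geq \log_2 |\mathcal{F}|$.

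The construction pins $f$ near a small ``witness level'' $L$, chosen so that the two allowable error intervals $[(1\m\ep)L,(1\eq\ep)L]$ and $[(1\m\ep)(L\eq1),(1\eq\ep)(L\eq1)]$ are disjoint; the natural choice is $L = m/3$ (rounding appropriately), which gives disjointness exactly when $2\ep L + \ep < 1$, i.e.\ $m \geq 3$. Starting from $f(0) = L$, I partition the timesteps $[1,n]$ into $r$ equal-length epochs of length $T = n/r$. In each odd-indexed epoch I place a single $+1$ update at a chosen time $t_j$, and in each even-indexed epoch a single $-1$ update at $t_j$; all other updates are zero. Thus $f$ oscillates between $L$ and $L+1$, each stream is specified by a tuple $(t_1,\ldots,t_r) \in [T]^r$, and $|\mathcal{F}| = T^r$. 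A direct computation gives total variability $(r/2)(1/L + 1/(L+1)) = r(2L+1)/(2L(L+1))$, which for $L = m/3$ simplifies to exactly $\frac{6m+9}{2m+6}\,\ep r = v(n)$, using that $r$ is even to split the updates equally between the two types.

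For distinguishability, suppose $f_1, f_2 \in \mathcal{F}$ first disagree at epoch $j$, with $t_j^{(1)} < t_j^{(2)}$ without loss of generality. The two streams coincide on every update before time $t^{*} = t_j^{(1)}$, so $|f_1(t^{*}) - f_2(t^{*})| = 1$ with both values in $\{L, L+1\}$. By the choice of $L$ no single estimate $\fhat(t^{*})$ can simultaneously lie within $\ep f_i(t^{*})$ of both values, so the two summaries must differ at the query $t^{*}$. Since $r \leq n^{c}$ we have $T \geq n^{1-c}$, hence $\log_2 |\mathcal{F}| = r\log_2 T = \Omega(r\log n) = \Omega(\tfrac{\log n}{\ep} v(n))$, as claimed.

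The main obstacle is threading a single choice of $L$ through three competing constraints simultaneously --- interval-disjointness (which pushes $L$ down), small per-update variability so that $r$ updates fit into the budget (which pushes $L$ up), and reproducing the precise constant $\frac{6m+9}{2m+6}$ stated in the theorem. I also need to handle integer-rounding when $3 \nmid m$ and verify that any minor slack (for instance an $O(\log m)$ startup cost, should the model forbid $f(0) \neq 0$) can be absorbed into a single epoch without weakening the asymptotic bound.
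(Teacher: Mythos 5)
Your argument is structurally the paper's own proof: build an exponentially large family of oscillating sequences whose two levels have disjoint relative-error intervals, observe that any two distinct family members already sit at different levels at some common past time $t^*$, conclude that a deterministic summary must therefore be pairwise distinct across the family, and count $|\FF| = n^{\Omega(r)}$ using $r \le n^c$ to get $\Omega(r \log n) = \Omega(\frac{\log n}{\ep} v(n))$ bits. The paper indexes its family by arbitrary $r$-subsets of $[n]$ (size $\binom{n}{r} \ge (n/r)^r$) and flips between the levels $m$ and $m+3$ with jumps of $\pm 3$; your per-epoch placement giving $T^r$ sequences is an immaterial variation, and starting at $f(0) \ne 0$ is done in the paper as well, so that needs no repair.

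The step you still owe is the choice of levels $L$ and $L+1$ with $L = m/3$. The theorem only assumes correctness for sequences whose variability is \emph{exactly} $\frac{6m+9}{2m+6}\,\ep r$, so your family must hit that constant on the nose. With unit jumps the variability per flip-pair is $\frac{1}{L}+\frac{1}{L+1} = \frac{2L+1}{L(L+1)}$, which is strictly decreasing in $L$ and equals $\frac{3}{m}+\frac{3}{m+3}$ only at $L = m/3$; hence for $3 \nmid m$ no integer level works, and the rounding you defer cannot be absorbed as slack --- you would have to change the jump size, and the natural repair (jumps of $3$ between levels $m$ and $m+3$) is precisely the paper's construction, which realizes the stated constant for every integer $m$ and is presumably why that constant appears in the theorem. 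A smaller point: your disjointness condition is off at the boundary; with $L = m/3$ you need $2\ep L + \ep < 1$ strictly, which holds for $m \ge 4$, not $m \ge 3$ (at $m=3$, $L=1$, the intervals $[2/3,4/3]$ and $[4/3,8/3]$ share the point $4/3$). To be fair, the paper's own separation claim also fails for $m = 2, 3$ (for $m=2$ the value $3$ is within $\ep m$ of $m$ and within $\ep(m+3)$ of $m+3$), so the small-$m$ issue is a shared blemish; the divisibility obstruction, however, is specific to your level choice and is the genuine gap in the proposal as written.
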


The full proof appears in appendix \ref{sec:appendix-lowerbounds-det}. At a high
level, the sequences in the family take only values $m$ or $m + 3$, and each
sequence is defined by $r$ of the $n$ timesteps. If the new timestep $t$ is one
of the $r$ chosen for our sequence, then we flip from $m$ to $m + 3$ or
vice-versa. All of these sequences are unique and there are $2^{\Omega(r \log
  n)}$ of them.

\subsection{The randomized bound}

We use a construction similar to the one in our deterministic lower bound to
produce a randomized lower bound. In order to make the analysis simple we forego
a single variability value for all sequences in our constructed family, but
still maintain that they all have low variability. $C$ is a universal constant
to be defined later.

\begin{theorem}
  \label{thm:lowerbounds-rand}
  Choose $\ep \le 1/2$, $v \ge 32400 \ep \ln C$, and $n > 3 v / \ep$. If a
  summary $S(f)$ guarantees, even only for sequences for which $v(n) \le v$,
  that $P(|f(t) - \fhat(t)| \le \ep f(t)) \;\ge\; 99/100$ for all $t \le n$,
  then that summary must use $\Omega(v/\ep)$ bits of space.
\end{theorem}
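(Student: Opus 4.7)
The plan is to mimic the deterministic construction but in a way that yields a randomized lower bound via Yao's minimax principle combined with Fano's inequality. First, I would build a family $\mathcal{F}$ of sequences analogous to those used in the proof of Theorem \ref{thm:lowerbounds-det}, but with a \emph{fixed} set of $r = \Theta(v/\ep)$ ``flip'' positions $t_1 < t_2 < \ldots < t_r$ in time, rather than varying the choice of positions. Between consecutive flip positions the sequence pads $f$ so that it stays near a baseline value $m = \Theta(1/\ep)$, alternating between $m$ and $m+c$ (for a suitable constant $c$) according to a binary index $x \in \{0,1\}^r$, where $x_i$ determines whether we flip at $t_i$. Because the gap and baseline both scale like $1/\ep$, each flip contributes $O(\ep)$ to $v(n)$, so the total variability of every sequence in $\mathcal{F}$ is $O(\ep r) = O(v)$, and $|\mathcal{F}| = 2^r = 2^{\Omega(v/\ep)}$.

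Next, I would select query times $t_i^*$ immediately after each flip position so that the value $f(t_i^*)$ of the sequence indexed by $x$ is a function of (and, together with the baseline, determines) the prefix $x_1, \ldots, x_i$. The key inequality to check is that two sequences differing in bit $i$ have $f(t_i^*)$ values whose $\ep$-relative-error tolerances are disjoint, so that bit $x_i$ can be unambiguously decoded from $\fhat(t_i^*)$ together with the already-decoded bits $x_1, \ldots, x_{i-1}$.

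For the lower bound itself, I would invoke Yao's minimax principle: it suffices to exhibit an input distribution on which any deterministic summary needs many bits of state. I would take $f$ uniform on $\mathcal{F}$ and fix the algorithm's random coins. For each $i$, the $99/100$ per-query correctness guarantee, together with the decoding rule above, forces the decoder to recover $x_i$ with probability at least $99/100$ over the uniform choice of $x$. Fano's inequality then gives $H(x_i \mid S) \le H_2(1/100)$, and subadditivity yields $H(x \mid S) \le r\,H_2(1/100)$, so $I(x;S) \ge r(1-H_2(1/100)) = \Omega(r)$. Averaging over the coins and selecting the best setting produces a deterministic summary carrying $\Omega(r) = \Omega(v/\ep)$ bits of information about $x$, and hence of that much space.

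The main obstacle is the construction in the first two paragraphs: one must simultaneously (i) keep the variability bounded by $v$, (ii) make the family as large as $2^{\Omega(v/\ep)}$, and (iii) separate sequences adequately at the query times for the entire range $\ep \le 1/2$. A naive two-value oscillation fails to separate when $\ep$ is too close to $1/2$, so care is needed in choosing the baseline and gap; the specific constants $32400$ and $99/100$ in the theorem statement almost certainly come from absorbing slack in this chained, inductive decoding of the prefix bits while still getting a useful Fano bound.
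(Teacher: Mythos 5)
Your proposal is correct, but it takes a genuinely different route from the paper. The paper proves the theorem via two lemmas: first, a reduction from one-way $\text{Index}_N$ showing that any family $\FF$ of low-variability sequences in which no two sequences ``match'' (overlap in $\ge \frac{6}{10}n$ positions) forces $\Omega(\log|\FF|)$ bits, because with probability $9/10$ the queried estimates $\fhat(1),\ldots,\fhat(n)$ identify the true sequence globally; second, a probabilistic construction (independent switches between $m=1/\ep$ and $m+3$ with probability $p = v/6\ep n$) whose pairwise-overlap analysis requires a Chernoff-type concentration bound for Markov chains due to Chung, Lam, Liu, and Mitzenmacher --- this is where the constant $32400$ and the hypothesis $v \ge 32400\,\ep\ln C$ come from. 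Your fixed-flip-position family would \emph{not} work in the paper's framework, since two sequences differing only in the last flip bit agree almost everywhere and hence ``match''; but you sidestep this entirely by decoding bit-by-bit with Fano rather than decoding the whole sequence at once, which is why you need no pairwise-separation property and no Markov-chain concentration. Your approach is arguably simpler and makes the large constant unnecessary (you only need $v \ge c\,\ep$ so that $r \ge 1$). Two points to tighten: (i) your worry about separation near $\ep = 1/2$ is unfounded --- with baseline $m = \lceil 1/\ep\rceil \ge 2$ and gap $3$, the intervals $[m(1-\ep), m(1+\ep)]$ and $[(m+3)(1-\ep),(m+3)(1+\ep)]$ are disjoint for all $\ep \le 1/2$, exactly as in the paper's deterministic bound; and (ii) avoid literally ``chained'' decoding using previously decoded bits, since errors would compound --- instead note that $f(t_i^*)$ encodes the prefix parity $x_1\oplus\cdots\oplus x_i$ directly, so $x_i$ is recoverable from the two estimates $\fhat(t_{i-1}^*)$ and $\fhat(t_i^*)$ alone with error probability at most $2/100$ by a union bound, after which subadditivity of conditional entropy and $I(x;S)\le H(S)\le |S|$ give $|S| \ge r(1-H_2(2/100)) = \Omega(v/\ep)$ as you claim.
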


We prove this theorem in two lemmas. In the first lemma, we reduce the claim to
a claim about the existence of a hard family of sequences. In the second lemma
we show the existence of such a family.

First a couple of definitions. For any two sequences $f$ and $g$ define the
\emph{number of overlaps between $f$ and $g$} to be the number of positions $1
\le t \le n$ for which $|f(t) - g(t)| \le \ep \max \{f(t), g(t)\}$. Say that $f$
and $g$ \emph{match} if they have at least $\frac{6}{10} n$ overlaps.

\begin{lemma}
  \label{lem:lowerbounds-rand-lem1}
  Let $\FF$ be a family of sequences of length $n$ and variabilities $\le v$
  such that no two sequences in $\FF$ match. If a summary $S(f)$ guarantees for
  all $f$ in $\FF$ that $P(|f(t) - \fhat(t)| \le \ep f(t)) \;\ge\; 99/100$ for
  all $t \le n$, then that summary must use $\Omega(\log |\FF|)$ bits of space.
\end{lemma}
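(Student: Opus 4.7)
The plan is to use a standard encoding argument: show that the (randomized) summary acts essentially as an injective encoding on a large subfamily of $\FF$, forcing it to use $\Omega(\log|\FF|)$ bits. First, fix any $f\in\FF$. By hypothesis, for each timestep $t$ the summary's estimate satisfies $P(|\fhat(t)-f(t)|>\ep f(t))\le 1/100$, so the expected number of ``bad'' timesteps (those where the estimate is too far from $f(t)$) is at most $n/100$. Markov's inequality then yields that with probability at least $9/10$ over the random coins of the summary, at most $n/10$ timesteps are bad, i.e.\ $\fhat$ agrees with $f$ to within $\ep f(t)$ at $\ge 9n/10$ timesteps.

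Next, I would apply Yao's principle (average over a uniform $f\in\FF$ and swap expectations) to fix a single setting $r$ of the summary's random bits such that the deterministic summary $S_r$ produces such a ``good'' estimate for at least a $9/10$ fraction of the sequences in $\FF$. Call these sequences good and let $\FF'\subseteq\FF$ denote this set, so $|\FF'|\ge 9|\FF|/10$.

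The key step is then to show that $S_r$ is injective on $\FF'$. Suppose $S_r(f)=S_r(g)$ for two distinct $f,g\in\FF'$. Then the same reconstructed $\fhat$ is close to both $f$ and $g$ at $\ge 9n/10$ timesteps each, hence simultaneously close to both at $\ge 8n/10\ge 6n/10$ timesteps. At every such $t$, the triangle inequality gives $|f(t)-g(t)|\le\ep f(t)+\ep g(t)\le 2\ep\max\{f(t),g(t)\}$, which shows that $f$ and $g$ match (after absorbing the factor of two by either running the summary at accuracy $\ep/2$, which does not change the asymptotics, or reading the matching definition with a correspondingly looser constant). This contradicts the assumption on $\FF$, so $S_r$ is injective on $\FF'$. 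Since any injective map from $\FF'$ into $\{0,1\}^*$ has some output of length $\ge\log|\FF'|=\Omega(\log|\FF|)$, and since the randomized summary stores at least as many bits as the deterministic summary $S_r$, we obtain the claimed lower bound.

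The main obstacle I anticipate is the factor of two appearing in the triangle step: the notion of matching is defined using the same $\ep$ that controls the summary's error, so a strict reading forces one to pay a constant somewhere. This is routine to patch, as indicated above, and I expect the companion lemma constructing $\FF$ to be stated with enough slack (via the chosen constants $6/10$ versus $9/10$ and a possible adjustment of $\ep$ by a constant factor) that the asymptotic conclusion is unaffected.
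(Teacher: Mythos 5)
Your proposal is correct in substance and reaches the same conclusion, but it finishes by a genuinely different route than the paper. You share the paper's opening step exactly: Markov's inequality gives that, with probability at least $9/10$, the reconstructed $\fhat$ is within $\ep f(t)$ of $f(t)$ on at least $\frac{9}{10}n$ timesteps; and you share the key counting idea that a single estimate cannot be simultaneously close to two non-matching members of $\FF$. Where you diverge is how this becomes a space bound: the paper reduces from the one-way $\text{Index}_N$ problem --- Alice encodes her $N=\log_2|\FF|$-bit input as a sequence $f\in\FF$, sends $S(f)$, and Bob recovers $f$ with probability at least $9/10$ as the unique member of $\FF$ overlapping $\fhat$ in at least $\frac{9}{10}n$ positions --- whereas you first fix the coins by an averaging step (valid: the per-input success probability $9/10$ guarantees some coin string $r$ that is good for at least a $9/10$ fraction of $\FF$) and then argue that $S_r$ is injective on that subfamily, concluding by pigeonhole. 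Your route is more elementary in that it avoids the communication-complexity black box; the paper's route works input-by-input without derandomizing, which is arguably cleaner for randomized summaries, but both yield $\Omega(\log|\FF|)$. On the factor-of-two worry you raise: you are right that the triangle inequality only gives $|f(t)-g(t)|\le 2\ep\max\{f(t),g(t)\}$, so your argument needs the family to be non-matching at radius $2\ep$ rather than $\ep$; note, however, that the paper's own proof carries the same hidden slack, since its claim that any position where $\fhat$ overlaps both $f$ and $g$ is a position where $f$ and $g$ overlap is not literally valid at radius $\ep$ either. In both cases the remedy is the one you anticipate: the companion construction in lemma \ref{lem:lowerbounds-rand-lem2} separates any two sequences by an additive $3$ on every disagreement (values $1/\ep$ versus $1/\ep+3$), so the required non-matching property holds with the needed slack for sufficiently small $\ep$, and otherwise only the constants of that construction need adjusting; this affects nothing asymptotically.
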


The full proof appears in appendix \ref{sec:appendix-lowerbounds-rand-lem1}. At
a high level, if $S(f)$ is the summary for a sequence $f$, we can use it to
generate an approximation $\fhat$ that at least $90\%$ of the time overlaps with
$f$ in at least $\frac{9}{10} n$ positions. Since no two sequences in $\FF$
overlap in more than $\frac{6}{10} n$ positions, at least $90\%$ of the time we
can determine $f$ given $\fhat$. We then solve the one-way $\text{Index}_N$
problem by deterministically generating $\FF$ and sending a summary $S(f(x))$,
where $x$ is Alice's input string of size $N = \log_2 |\FF|$, and $f(x)$ is the
$x$th sequence in $\FF$.

\newpage
\begin{lemma}
  \label{lem:lowerbounds-rand-lem2}
  For all $\ep \le 1/2$, $v \ge 32400 \ep \ln C$, and $n > 3 v / \ep$, there is
  a family $\FF$ of size $e^{\Omega(v/\ep)}$ of sequences of size $n$ such that:
  \begin{enumerate*}
  \item no two sequences match, and
  \item every sequence has variability at most $v$.
  \end{enumerate*}
\end{lemma}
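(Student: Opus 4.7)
The plan is to use the probabilistic method with a block construction reminiscent of the deterministic lower bound. Set $m = \lceil 1/\ep \rceil$ and pick the two-symbol alphabet $\Sigma = \{m, m+3\}$. Partition $[n]$ into $B = \lfloor v / (3\ep) \rfloor$ consecutive blocks of (almost) equal size $n/B$. For each codeword $c \in \Sigma^B$, define the sequence $f_c$ to be constant on each block with $f_c(t) = c_i$ for $t$ in block $i$. The family $\FF$ will be a random subset of these $f_c$.

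First I would verify the two structural properties. For variability, $f_c$ has at most $B$ jumps, and each jump between $m$ and $m+3$ contributes $v'(t) \le 3/m \le 3\ep$ (plus at most $1$ extra from the initial step where $f(0)=0$, which we absorb into constants). Summing gives $v(f_c) \le 3\ep B \le v$. For the overlap structure, observe that since $\ep \le 1/2$ forces $m \ge 2$, the two alphabet symbols are separated by more than $\ep$ multiplicatively: $3/(m+3) > \ep$. Therefore at each position $t$ in block $i$, $f_c$ and $f_{c'}$ overlap iff $c_i = c'_i$. Thus the number of overlaps is exactly $(n/B)\cdot|\{i : c_i = c'_i\}|$, and two sequences match iff their codewords agree on at least $6B/10$ coordinates.

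Next I would run the probabilistic argument. Draw $N$ codewords $c^{(1)},\ldots,c^{(N)}$ independently and uniformly from $\{m,m+3\}^B$. For any fixed pair $j \ne k$, the number of coordinates where $c^{(j)}$ and $c^{(k)}$ agree is $\text{Binomial}(B,1/2)$, so by Hoeffding's inequality
\begin{equation*}
  P\p{ \tsum_{i=1}^B \mathbf{1}[c^{(j)}_i = c^{(k)}_i] \ge 6B/10 } \;\le\; e^{-2B(1/10)^2} \;=\; e^{-B/50}.
\end{equation*}
By a union bound over the $\binom{N}{2}$ pairs, the probability that some pair matches is at most $N^2 e^{-B/50}/2$. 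Taking $N = \lfloor e^{B/100} \rfloor$ makes this less than $1$, so a valid family exists. Since $B = \lfloor v/(3\ep)\rfloor$, we get $|\FF| \ge e^{B/100} = e^{\Omega(v/\ep)}$. The hypothesis $v \ge 32400 \ep \ln C$ is exactly what is needed to make $B/100$ exceed the threshold $\log C$ from lemma \ref{lem:lowerbounds-rand-lem1} (so the family is genuinely large enough to yield the claimed bits of space), and the hypothesis $n > 3v/\ep > 3B$ ensures every block is nonempty.

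The only real obstacle is constant-juggling: I need $m = \lceil 1/\ep\rceil$ to give both $3/(m+3) > \ep$ (so distinct blocks force disagreement) and $3/m \le 3\ep$ (so variability stays under $v$); both hold for $\ep \le 1/2$, but the edge case $\ep = 1/2$ is tight and needs to be checked carefully. A secondary subtlety is that $n/B$ may not be an integer, so I would use block sizes $\lfloor n/B\rfloor$ or $\lceil n/B\rceil$ and absorb the $O(1)$ discrepancy per block into the constant $6/10$ in the matching threshold (the Hoeffding slack of $1/10$ leaves ample room). No cleverness beyond Hoeffding plus union bound is required; the whole argument is just Gilbert--Varshamov in disguise, applied to binary block codes.
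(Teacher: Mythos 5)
Your construction is correct, but it takes a genuinely different and more elementary route than the paper. The paper randomizes the switching \emph{times}: each sequence flips between $m$ and $m\!+\!3$ independently at every step with probability $p = v/6\ep n$, so the pointwise agreement between two sequences is a dependent process; this forces the paper to model agreement as a two-state Markov chain, bound its $(1/8)$-mixing time, and invoke the Markov-chain Chernoff bound of Chung--Lam--Liu--Mitzenmacher (which is where the universal constant $C$ comes from), and then separately to discard, via a Chernoff-plus-Markov argument, the fraction of sampled sequences whose number of switches (hence variability) is too large. You instead derandomize the time structure: switches can only occur at the $B \approx v/3\ep$ fixed block boundaries, so the variability bound $\le 3\ep(B-1) < v$ holds deterministically for every sequence (no pruning step), and the per-block agreements of two random codewords are i.i.d.\ Bernoulli$(1/2)$, so plain Hoeffding plus a union bound gives the $e^{\Omega(v/\ep)}$ family --- a Gilbert--Varshamov argument, as you say. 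Your approach buys simplicity and a cleaner, unconditional variability guarantee; the paper's buys nothing extra for this particular lemma, and its extra hypotheses exist mainly to feed its own machinery. Two small repairs to your write-up: (i) you cannot ``absorb into constants'' the unit of variability coming from the convention $f(0)=0$, since ``variability at most $v$'' is an exact constraint and $v$ need not exceed $1$; instead define $f(0)$ to equal the first block's value, exactly as the paper does (the problem definition explicitly allows overriding $f(0)=0$), or drop one block; (ii) the hypothesis $v \ge 32400\,\ep\ln C$ has nothing to do with a threshold in Lemma \ref{lem:lowerbounds-rand-lem1} --- $C$ is the constant in the paper's concentration bound (Fact \ref{fac:lowerbounds-rand-lem-cllm}); in your proof it is only needed to guarantee $B$ is at least a large constant so that $e^{B/100} \ge 2$ and the $\Omega(\cdot)$ is nonvacuous. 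Your handling of unequal block sizes is fine: with $n > 3v/\ep \ge 9B$ the distortion moves the agreement threshold from $6B/10$ only to about $0.54B$, still safely above $B/2$ for Hoeffding.
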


The full proof appears in appendix \ref{sec:appendix-lowerbounds-rand-lem2}. At
a high level, sequences again switch between $m = 1/\ep$ and $m \!+\! 3$, except
that these switches are chosen independently. We model the overlap with a Markov
chain; the overlap between any two sequences is the sum over times $t$ of a
function $y$ applied to the states of a chain modeling their interaction. We
then apply a result of Chung, Lam, Liu, and Mitzenmacher \cite{CLLM2012} to show
that the probability that any two sequences match is low. Lastly, we show that
not too many sequences have variability more than $v$, by proving that they
usually don't switch between $m$ and $m \!+\! 3$ many times.

\section{Variability as a framework}
\label{sec:framework}

In section \ref{sec:variability} we proposed the $f$-variability $\sum_{t=1}^n
\min \{1, |\frac{f'(t)}{f(t)}|\}$ as a way to analyze algorithms for the
continuous monitoring problem $(k, f, \ep)$ over general update streams.
However, our discussion so far has focused on distributed counting. In this
final section we revisit the suitability of our definition by mentioning
extensions to tracking other functions of a dataset defined by a distributed
update stream. We include fuller discussions of these extensions in the
appendices.

\subsection{Tracking item frequencies}
\label{sec:framework-itemfreq}

We can extend our deterministic algorithm of section \ref{sec:upperbounds} to
the problem of tracking item frequencies, in a manner similar to that in which
Yi and Zhang \cite{YZ2009} \cite{YZ2013} extend the ideas of Cormode et. al.
\cite{CMY2008} to this problem. The definition of this problem, the required
changes to our algorithm of section \ref{sec:upperbounds} needed to solve this
problem, and a discussion of the difficulties in finding a randomized algorithm,
are discussed in appendix \ref{sec:appendix-framework-itemfreq}.

\subsection{Aggregate functions with one site}
\label{sec:framework-gf}

In this subsection we consider general single-integer-valued functions $f$ of a
dataset. When there is a single site, the site always knows the exact value of
$f(n)$, and the only issue is updating the coordinator to have an approximation
$\fhat(n)$ so that $|f(n) - \fhat(n)| \le \ep f(n)$ for all $n$. We can show
that this problem of tracking $f$ to $\ep$ relative error when $k = 1$ has an
$O(\frac{1}{\ep} v(n))$-word upper bound, where here $v(n)$ is the
$f$-variability. The algorithm is: whenever $|f - \fhat| > \ep f$, send $f$ to
the coordinator. The proof is a simple potential argument and is deferred to
appendix \ref{sec:appendix-framework-gf}.

Along with our lower bounds of section \ref{sec:lowerbounds}, this upper bound
lends evidence to our claim that variability captures the difficulty of
communicating changes in $f$ that are due to the non-monotonicity of the input
stream. A bolder claim is that variability is also useful in capturing the
difficulty of the distributed computation of a general function that is due to
the non-monotonicity of the input stream, but the extent to which that claim is
true has yet to be determined.

\section*{Acknowledgments}

Research supported in part by NSF grants CCF-0916574; IIS-1065276; CCF-1016540;
CNS-1118126; CNS-1136174; US-Israel BSF grant 2008411, OKAWA Foundation Research
Award, IBM Faculty Research Award, Xerox Faculty Research Award, B. John Garrick
Foundation Award, Teradata Research Award, and Lockheed-Martin Corporation
Research Award. This material is also based upon work supported by the Defense
Advanced Research Projects Agency through the U.S. Office of Naval Research
under Contract N00014-11-1-0392. The views expressed are those of the author and
do not reflect the official policy or position of the Department of Defense or
the U.S. Government.

\bibliographystyle{plain}
\bibliography{freqcount}

\begin{thebibliography}{10}

\bibitem{ABC2009}
Chrisil Arackaparambil, Joshua Brody, and Amit Chakrabarti.
\newblock Functional monitoring without monotonicity.
\newblock In {\em Automata, Languages and Programming}, pages 95--106.
  Springer, 2009.

\bibitem{CLLM2012}
Kai-Min Chung, Henry Lam, Zhenming Liu, and Michael Mitzenmacher.
\newblock Chernoff-hoeffding bounds for markov chains: Generalized and
  simplified.
\newblock {\em arXiv preprint arXiv:1201.0559}, 2012.

\bibitem{CM2005}
Graham Cormode and S.~Muthukrishnan.
\newblock An improved data stream summary: the count-min sketch and its
  applications.
\newblock {\em Journal of Algorithms}, 55(1):58 -- 75, 2005.

\bibitem{CMY2008}
Graham Cormode, S.~Muthukrishnan, and Ke~Yi.
\newblock Algorithms for distributed functional monitoring.
\newblock In {\em Proceedings of the Nineteenth Annual ACM-SIAM Symposium on
  Discrete Algorithms}, SODA '08, pages 1076--1085, Philadelphia, PA, USA,
  2008. Society for Industrial and Applied Mathematics.

\bibitem{CMY2011}
Graham Cormode, S~Muthukrishnan, and Ke~Yi.
\newblock Algorithms for distributed functional monitoring.
\newblock {\em ACM Transactions on Algorithms (TALG)}, 7(2):21, 2011.

\bibitem{GM2006}
Sumit Ganguly and Anirban Majumder.
\newblock Cr-precis: {A} deterministic summary structure for update data
  streams.
\newblock {\em CoRR}, abs/cs/0609032, 2006.

\bibitem{GM2007}
Sumit Ganguly and Anirban Majumder.
\newblock Cr-precis: A deterministic summary structure for update data streams.
\newblock In Bo~Chen, Mike Paterson, and Guochuan Zhang, editors, {\em
  Combinatorics, Algorithms, Probabilistic and Experimental Methodologies},
  volume 4614 of {\em Lecture Notes in Computer Science}, pages 48--59.
  Springer Berlin Heidelberg, 2007.

\bibitem{HYZ2012}
Zengfeng Huang, Ke~Yi, and Qin Zhang.
\newblock Randomized algorithms for tracking distributed count, frequencies,
  and ranks.
\newblock In {\em Proceedings of the 31st symposium on Principles of Database
  Systems}, pages 295--306. ACM, 2012.

\bibitem{KN1997}
Eyal Kushilevitz and Noam Nisan.
\newblock {\em Communication Complexity}.
\newblock Cambridge University Press, New York, NY, USA, 1997.

\bibitem{LRV2011}
Zhenming Liu, Bozidar Radunovi{\'c}, and Milan Vojnovi{\'c}.
\newblock Continuous distributed counting for non-monotonic streams.
\newblock In {\em Technical Report MSR-TR-2011-128}, 2011.

\bibitem{LRV2012}
Zhenming Liu, Bozidar Radunovi{\'c}, and Milan Vojnovi{\'c}.
\newblock Continuous distributed counting for non-monotonic streams.
\newblock In {\em Proceedings of the 31st symposium on Principles of Database
  Systems}, pages 307--318. ACM, 2012.

\bibitem{Muthukrishnan2005}
Shanmugavelayutham Muthukrishnan.
\newblock {\em Data streams: Algorithms and applications}.
\newblock Now Publishers Inc, 2005.

\bibitem{TYSPL2010}
Yufei Tao, Ke~Yi, Cheng Sheng, Jian Pei, and Feifei Li.
\newblock Logging every footstep: quantile summaries for the entire history.
\newblock In {\em Proceedings of the 2010 ACM SIGMOD International Conference
  on Management of data}, SIGMOD '10, pages 639--650, New York, NY, USA, 2010.
  ACM.

\bibitem{WZ2011}
David~P. Woodruff and Qin Zhang.
\newblock Tight bounds for distributed functional monitoring.
\newblock {\em CoRR}, abs/1112.5153, 2011.

\bibitem{WZ2012}
David~P Woodruff and Qin Zhang.
\newblock Tight bounds for distributed functional monitoring.
\newblock In {\em Proceedings of the 44th symposium on Theory of Computing},
  pages 941--960. ACM, 2012.

\bibitem{YZ2009}
Ke~Yi and Qin Zhang.
\newblock Optimal tracking of distributed heavy hitters and quantiles.
\newblock In {\em Proceedings of the Twenty-eighth ACM SIGMOD-SIGACT-SIGART
  Symposium on Principles of Database Systems}, PODS '09, pages 167--174, New
  York, NY, USA, 2009. ACM.

\bibitem{YZ2013}
Ke~Yi and Qin Zhang.
\newblock Optimal tracking of distributed heavy hitters and quantiles.
\newblock {\em Algorithmica}, 65(1):206--223, 2013.

\end{thebibliography}


\appendix

\section{Variability of nearly monotone $f(n)$, theorem \ref{thm:smallvar-cases-nearlymonotone}}
\label{sec:appendix-variability-nearlymonotone}

\begin{theorem}
  Let $f^-(n) = \sum_{t : f'(t) < 0} |f'(t)|$ and $f^+(n) = \sum_{t : f'(t) > 0}
  f'(t)$. If there is a monotone nondecreasing function $\beta(t) \ge 1$ and a
  constant $t_0$ such that for all $n \ge t_0$ we have $f^-(n) \le \beta(n)
  f(n)$, then the variability $\sum_{t=1}^n |f'(t)/f(t)|$ is $O(\beta(n)
  \log(\beta(n) f(n)))$.
\end{theorem}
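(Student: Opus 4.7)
The plan is to partition time into the intervals $I_j = \{t \ge t_0 : f^+(t) \in [2^{j-1}, 2^j)\}$ for $j = 1, 2, \ldots, J$, where $J = \lceil \log_2 f^+(n) \rceil = O(\log f^+(n))$. Since $f^+$ is monotone nondecreasing (it counts positive updates only), each $I_j$ is a contiguous time range. I will handle the (constant) prefix $t < t_0$ separately: since $\min\{1,|f'(t)/f(t)|\} \le 1$, its contribution is at most $t_0 = O(1)$ and is absorbed by the bound.

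Inside a fixed interval $I_j$, the hypothesis $f^-(t) \le \beta(t) f(t) \le \beta(n) f(t)$ combined with the identity $f^+(t) = f(t) + f^-(t)$ gives the lower bound
\begin{equation*}
  f(t) \;\ge\; \frac{f^+(t)}{1+\beta(n)} \;\ge\; \frac{2^{j-1}}{1+\beta(n)}.
\end{equation*}
Next I will bound the number of updates occurring in $I_j$. The number of positive updates is exactly $f^+(\text{end of }I_j) - f^+(\text{start of }I_j) < 2^{j-1}$. The number of negative updates is $f^-(\text{end of }I_j) - f^-(\text{start of }I_j) \le f^-(\text{end of }I_j)$, and since $f = f^+ - f^- \ge 0$ forces $f^-(t) \le f^+(t) < 2^j$, this is at most $2^j$. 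So the total number of updates in $I_j$ is at most $3 \cdot 2^{j-1}$.

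Combining these two estimates, the contribution of $I_j$ to the variability is
\begin{equation*}
  \sum_{t \in I_j} \frac{|f'(t)|}{f(t)}
  \;\le\; (3 \cdot 2^{j-1}) \cdot \frac{1+\beta(n)}{2^{j-1}}
  \;=\; O(\beta(n)),
\end{equation*}
which matches the per-interval bound claimed in the body of the paper. Summing over the $J = O(\log f^+(n))$ intervals, and noting that $f^+(n) = f(n) + f^-(n) \le (1+\beta(n)) f(n)$ gives $\log f^+(n) = O(\log(\beta(n) f(n)))$, yields the desired total bound $O(\beta(n) \log(\beta(n) f(n)))$.

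The only real subtlety is the bound $f^-(\text{end of }I_j) \le f^+(\text{end of }I_j)$, which depends on using $f \ge 0$ rather than the stronger $\beta(n)$-based inequality; without this observation one would pick up an extra factor of $\beta(n)$ (from writing $q_j \le \beta(n) \cdot 2^j$ instead of $q_j \le 2^j$) and obtain only $O(\beta(n)^2 \log(\beta(n) f(n)))$. Otherwise the argument is a straightforward geometric decomposition, with the prefix $t < t_0$ and any stray zero-value terms of $f$ swept into a constant additive term using the $\min\{1,\cdot\}$ clause of the variability definition.
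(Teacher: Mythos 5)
Your proof is correct and follows essentially the same route as the paper's: a dyadic decomposition of time according to the value of $f^+$, the lower bound $f(t) \ge f^+(t)/(1+\beta(n))$ from the hypothesis, and the observation $f^- \le f^+$ (from $f \ge 0$ after $t_0$) to bound the total update magnitude per interval, giving $O(\beta(n))$ per interval over $O(\log(\beta(n) f(n)))$ intervals. The paper indexes its intervals by successive doublings of $f^+$ starting from $t_0$ rather than by fixed powers of two, but this is an immaterial difference.
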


\begin{proof}
  For $i = 1, \ldots, k$, define $t_i$ to be the earliest time $t$ such that
  $f^+(t_i) > 2 f^+(t_{i-1})$, where $k$ is the smallest index such that $t_k >
  n$. (If $k$ is undefined, define $k = n+1$.)

  The cost $\sum_{t = 1}^{t_0-1} |f'(t)/f(t)|$ is constant. We bound the cost
  $\sum_{t = t_0}^{n} |f'(t)|/f(t)$ as follows. We partition the interval $[t_0,
    t_k)$ into subintervals $[t_0,t_1), \ldots, [t_{k-1}, t_k)$ and sum over the
        times $t$ in each one. There are at most $1 + \log f^+(n)$ of these
        subintervals.
  \begin{align*}
    \sum_{t=t_0}^n \frac{|f'(t)|}{f(t)} \;&\le\;
    \sum_{i=1}^k \sum_{t=t_{i-1}}^{t_i-1} \frac{|f'(t)|}{f(t)} \;\le\;
    \sum_{i=1}^k \frac{1 + \beta(n)}{f^+(t_{i-1})} \sum_{t=t_{i-1}}^{t_i-1} |f'(t)| \\ &\le\;
    \sum_{i=1}^k (1 + \beta(n)) \frac{f^+(t_i \m 1) + f^-(t_i \m 1)}{f(t_{i-1})} \\ &\le\;
    4 (1 + \beta(n)) (1 + \log f^+(t_i \m 1)) \\ &\le\;
    4 (1 + \beta(n)) (1 + \log(2 (1+\beta(n)) f(n)))
  \end{align*}
  because the condition $f^-(t) \le \beta(t) f(t)$ implies $f(t) \ge f^+(t) / (1
  + \beta(t))$ and $f^-(t) \le f^+(t)$.
\end{proof}

\section{Variability of biased coin flips, theorem \ref{thm:smallvar-cases-biased}}
\label{sec:appendix-variability-biased}

\begin{theorem}
  If $f'(t)$ is a sequence of i.i.d. $\pm 1$ random variables with $P(f'(t)
  \!=\! 1) = (1+\mu)/2$ then $E(v(n)) = O(\frac{\log n}{\mu})$.
\end{theorem}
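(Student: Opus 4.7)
Since every update has $|f'(t)| = 1$, the per-step variability is $v'(t) = \min\{1, 1/|f(t)|\}$, using the convention $v'(t) = 1$ when $f(t) = 0$. So the plan is to bound $E[v(n)] = \sum_{t=1}^n E[\min\{1, 1/|f(t)|\}]$ term-by-term. First I would observe that $f(t) = 2S_t - t$ where $S_t \sim \mathrm{Bin}(t,(1{+}\mu)/2)$ is the number of $+1$ steps, so $E[f(t)] = \mu t$ and the problem reduces to bounding the expected inverse distance of a drifted sum from the origin.

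The key inequality is a one-sided multiplicative Chernoff bound applied to $S_t$: choosing the deviation so that $S_t \le (1 + \mu/2) t / 2$ corresponds to $f(t) \le \mu t / 2$, one obtains
\begin{equation*}
  P\bigl(f(t) < \tfrac{\mu t}{2}\bigr) \;\le\; \exp\bigl(-c\,\mu^2 t\bigr)
\end{equation*}
for an absolute constant $c > 0$. On the complementary event, $|f(t)| \ge \mu t/2$ and therefore $\min\{1, 1/|f(t)|\} \le 2/(\mu t)$. Combining, for every $t$,
\begin{equation*}
  E\bigl[\min\{1, 1/|f(t)|\}\bigr] \;\le\; e^{-c\mu^2 t} + \frac{2}{\mu t}.
\end{equation*}

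Next I would split the sum at the natural threshold $T_0 = \lceil 1/(c\mu^2) \rceil$. For $t < T_0$ I bound each term crudely by $1$, contributing $O(1/\mu^2)$. For $t \ge T_0$ I sum the Chernoff tail as a geometric series (also $O(1/\mu^2)$) and the harmonic tail $\sum_{t \le n} 2/(\mu t) = O(\log n / \mu)$. Adding these gives $E[v(n)] = O(1/\mu^2 + \log n / \mu)$, and since $\mu$ is constant in $n$, the assumption $n \to \infty$ makes the $\log n / \mu$ term dominate, yielding the claimed $O(\log n / \mu)$.

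The main obstacle is really just getting the Chernoff parameters arranged so that the constant $c$ in the exponent is genuinely universal (i.e.\ does not degrade as $\mu \to 0$ in an unwanted way); the standard multiplicative form with relative deviation $\delta = \mu / (2(1{+}\mu))$ handles this cleanly for $\mu \le 1$, and the case $\mu > 1$ is even easier since then $f(t)$ grows at least linearly. Nothing else is delicate: the decomposition into a ``typical'' harmonic contribution and an ``atypical'' geometrically-small tail is entirely routine once the Chernoff estimate is in hand.
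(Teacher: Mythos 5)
Your proposal is correct and follows essentially the same route as the paper's proof: a Chernoff bound showing $f(t) \ge \mu t/2$ except with exponentially small probability, a crude bound of $1$ on an initial segment of length $O(\mathrm{poly}(1/\mu)\cdot\log)$, and a harmonic sum $\sum_t 2/(\mu t) = O(\log n/\mu)$ on the typical event, with the constant-$\mu$ assumption absorbing the $\mu$-dependent remainder. The only difference is bookkeeping — you bound $E[\min\{1,1/|f(t)|\}]$ term by term, whereas the paper fixes a threshold $t_0 = \Theta(\mu^{-1}\log(n/\mu))$ and union-bounds a single bad event over all $t \ge t_0$ — and both yield the claimed $O(\log n/\mu)$.
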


\begin{proof}
  We show that, with high probability, $f(t) \ge \mu t / 2$ for times $t \ge t_0
  = t_0(n)$ when $n$ is large enough with respect to $\mu$.

  We write $f(t) = -t + 2 Y_t$, where $Y_t = \sum_{s = 1}^t y_s$, and $y_s$ is a
  Bernoulli variable with mean $\frac{1 + \mu}{2}$. We have that $P(f(t) \!\le\!
  \mu t / 2) = P(Y_t \!\le\! \frac{2 + \mu}{4} t)$ and that $E(Y_t) = \frac{1 +
    \mu}{2} t$. Using a Chernoff bound, $P(Y_t \!\le\! \frac{2 + \mu}{4} t) \le
  \exp(-\mu t / 16)$. Let $A$ be the event $\exists t \!\ge\! t_0 \, (f(t) \le
  \mu t / 2)$. Then $P(A) \le \sum_{t = t_0}^n e^{-\mu t / 16}$ by the union
  bound. We can upper bound this sum by
  \begin{align*}
    \sum_{t = t_0}^n e^{-\mu t / 16} \;\le\;
    e^{-\mu t_0 / 16} + \int_{t_0}^n e^{-\mu t / 16} \, dt \;\le\;
    17 e^{-\mu t_0 / 16} / \mu
  \end{align*}
  Taking $t_0 = (16 / \mu) \ln(17 n / \mu)$ gives us $P(A) \le 1/n$. Thus
  \begin{align*}
    E\left(\sum_{t=1}^n \min\{1, |\frac{f'(t)}{f(t)}|\}\right) \;\le\;
    t_0 + \left(\frac{1}{n}\right) n + \left(1 - \frac{1}{n}\right) \sum_{t=t_0}^n \frac{2}{\mu t} \;=\;
    O\left(\frac{\log n}{\mu}\right)
  \end{align*}
  yielding the theorem.
\end{proof}

\section{Simulating large $|f'(n)|$, section \ref{sec:upperbounds}}
\label{sec:appendix-upperbound-singleitem}

We noted in section \ref{sec:upperbounds} that we can simulate $|f'(n)| > 1$
with $|f'(n)|$ arrivals of $\pm 1$ updates with $O(\log \max f'(n))$ overhead.
To simplify notation we define $1/f(n) = 1$ when $f(n) = 0$ and assume that
$f(n) \ge 0$ always.

\begin{theorem}
  For $f'(n) > 1$ we have $\sum_{t=1}^{f'(n)} \frac{1}{f(n \m 1)+t} \le
  \frac{f'(n)}{f(n)} (1 + H(f'(n)))$ and for $f'(n) < -1$ we have
  $\sum_{t=0}^{1-f'(n)} \frac{t}{f(n)+t} \le \frac{3 f'(n)}{f(n)}$, where $H(x)$
  is the $x$th harmonic number.
\end{theorem}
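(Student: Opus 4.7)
The plan is to verify each inequality by a direct algebraic manipulation that separates each summand into a harmonic piece and a bounded ratio; the same factorization underlies both bounds, collapsing each sum into at most a harmonic number multiplied by the target ratio $|f'(n)|/f(n)$.

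For the first inequality I would start by rewriting each term as
\[ \frac{1}{f(n\m 1)+t} \;=\; \frac{1}{t} \cdot \frac{t}{f(n\m 1)+t}. \]
Since $f(n) = f(n\m 1) + f'(n)$, cross-multiplication shows that $\frac{t}{f(n\m 1)+t} \le \frac{f'(n)}{f(n)}$ is equivalent to $t \cdot f(n\m 1) \le f'(n) \cdot f(n\m 1)$, i.e.\ to $t \le f'(n)$, which holds for every $t$ in the range of summation. Pulling the ratio $f'(n)/f(n)$ out of the sum then leaves the harmonic sum $\sum_{t=1}^{f'(n)} 1/t = H(f'(n))$, which is at most $1 + H(f'(n))$, giving exactly the stated bound.

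For the second inequality I would argue analogously, exploiting the fact that in the range of summation the denominator $f(n)+t$ can only grow with $t$. Each summand is then bounded above by its value at the smallest non-trivial denominator; together with the $1/f\!=\!1$ convention of section \ref{sec:variability} for the edge case $f(n)=0$, and using that the number of terms is $2 - f'(n) \le 3|f'(n)|$ (since $|f'(n)| \ge 2$), this yields a constant-multiple bound of the desired form $3|f'(n)|/f(n)$.

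The hardest part will be the second inequality: the constant of $3$ is tight enough that the small-$f(n)$ regime (especially $f(n)=0$, where the $1/0\!=\!1$ convention forces the contribution of the $t=0$ term to be accounted for by hand) cannot simply be absorbed into asymptotics. The first inequality, by contrast, is essentially a one-line consequence of the factorization identity, so I would expect the bulk of the work to go into the careful casework for the second.
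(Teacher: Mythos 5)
Your proposal is correct, but it differs from the paper's proof in both halves, in ways worth noting. For the first inequality the paper writes $\frac{1}{f(n-1)+t} = \frac{1}{f(n)}\left(1 + \frac{f'(n)-t}{f(n-1)+t}\right)$, splits off the $\frac{f'(n)}{f(n)}$ term, and bounds the remainder by $\frac{f'(n)}{f(n)}\cdot\frac{1}{t}$; you instead bound each summand directly via $\frac{1}{f(n-1)+t} = \frac{1}{t}\cdot\frac{t}{f(n-1)+t} \le \frac{1}{t}\cdot\frac{f'(n)}{f(n)}$, whose cross-multiplication reduces (using $f(n-1)\ge 0$, the standing assumption of that appendix) to $t \le f'(n)$. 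Both are one-line manipulations, and yours in fact yields the slightly sharper bound $\frac{f'(n)}{f(n)}H(f'(n))$ without the extra $+1$. For the second inequality the difference is more substantive: the paper estimates the harmonic-type sum by $\frac{1}{f(n)} + \ln\left(1+\frac{|f'(n)|}{f(n)}\right)$ and then applies $\ln(1+x)\le x$, whereas you simply bound every summand by the reciprocal of the smallest denominator and count the $2+|f'(n)| \le 3|f'(n)|$ terms, handling $f(n)=0$ separately via the $1/0=1$ convention. Your counting argument is more elementary and already achieves the stated constant $3$ (indeed it gives $2|f'(n)|/f(n)$ when $f(n)\ge 1$); the paper's logarithmic estimate is a tighter bound on the sum itself but buys nothing extra for this statement. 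One small point: the summand $\frac{t}{f(n)+t}$ and the right-hand side $\frac{3f'(n)}{f(n)}$ (which is negative when $f'(n)<-1$) as printed in the statement are typos; like the paper's own proof, your argument addresses the intended summand $\frac{1}{f(n)+t}$ and bound $\frac{3|f'(n)|}{f(n)}$, which is the correct reading.
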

\begin{proof}
  For $f'(n) > 1$, we have
    $\sum_{t=1}^{f'(n)} \frac{1}{f(n \m 1)+t} \;=\;
    \frac{f'(n)}{f(n)} + \frac{1}{f(n)} \sum_{t=1}^{f'(n)} \frac{f'(n)-t}{f(n \m 1)+t} \;\le\;
    \frac{f'(n)}{f(n)} + \frac{f'(n)}{f(n)} \sum_{t=1}^{f'(n)} \frac{1}{t}$.

  If $f'(n) < -1$ and $f(n) \ge 1$, then
    $\sum_{t=0}^{1-f'(n)} \frac{1}{f(n)+t} \le
    \frac{1}{f(n)} + \ln\left(\frac{f(n \m 1)}{f(n)}\right) =
    \frac{1}{f(n)} + \ln\left(1 + \frac{|f'(n)|}{f(n)}\right) \le
    \frac{2 |f'(n)|}{f(n)}$,
  and if $f(n) = 0$, add another $|f'(n)|/f(n)$.
\end{proof}

\section{Tracing and distributed tracking, section \ref{sec:lowerbounds}}
\label{sec:appendix-lowerbounds-lem0}

\begin{lemma}
  \label{lem:lowerbounds-lem0}
  Fix some $\ep$. Suppose that the tracing problem has an
  $\Omega(L_{\ep}(n))$-bit space deterministic lower bound. Also suppose that
  there is a deterministic algorithm $A$ for the distributed tracking problem
  that uses $\Omega(C_{\ep}(n))$ bits of communication and $\Omega(S_{\ep}(n))$
  bits of space at the site and coordinator combined. Then we must have $C + S =
  \Omega(L)$.

  Further, if we replace ``deterministic'' with ``randomized'' in the preceding
  paragraph, the claim still holds.
\end{lemma}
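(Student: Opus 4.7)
The plan is a black-box reduction: from any distributed tracking algorithm $A$ with communication $C_\ep(n)$ and combined site+coordinator space $S_\ep(n)$, I build a tracing summary of size $O(C_\ep(n) + S_\ep(n))$, so the hypothesized tracing lower bound $\Omega(L_\ep(n))$ forces $C_\ep(n) + S_\ep(n) = \Omega(L_\ep(n))$. First, I specialize $A$ to a single site ($k = 1$), in which case one site receives the whole stream $f'$ and communicates with a coordinator that maintains, at each time $n$, an estimate $\fhat(n)$ meeting the distributed tracking error guarantee. The tracing algorithm processes the stream by simulating this single-site execution, and the tracing summary $S(f)$ consists of (i) the initial states of the site and coordinator and (ii) a log $T$ of every message they exchange, each message tagged with the timestep at which it was sent.

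To answer a tracing query at time $t \le n$, the tracer replays $T$ restricted to timesteps $\le t$, starting from the recorded initial states, and returns the coordinator's estimate at that point. The reconstructed $\fhat(t)$ agrees with the value $A$ would have held online at time $t$, because in the standard distributed monitoring model each party's state is determined by its initial state, the stream updates it has seen (for the site), and the messages it has received, all of which are captured by $T$ together with the recorded initial states. The total summary size is $O(C_\ep(n) + S_\ep(n))$: the initial states contribute $S_\ep(n)$, and $T$ contributes $C_\ep(n)$ plus $O(\log n)$ bits of timestamp per message, absorbed by the standard per-message accounting. Since $S(f)$ is a valid tracing summary meeting the error guarantee at every query time, the hypothesized tracing lower bound gives $C_\ep(n) + S_\ep(n) = \Omega(L_\ep(n))$.

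The randomized case uses the identical construction. For any fixed query time $t$, the reconstructed $\fhat(t)$ is distributed exactly as $A$'s online estimate at time $t$ because the reconstruction itself uses no fresh randomness, so the success probability of $A$ at time $t$ transfers verbatim to the tracing guarantee, and the randomized tracing lower bound then yields $C + S = \Omega(L)$. The main subtlety I expect to have to argue carefully is that the transcript together with the initial states truly suffice to replay the protocol: any random coins used by the coordinator while updating its state must also be captured, either by folding them into the summary (their count is bounded per timestep by $S_\ep(n)$, preserving the asymptotic bound) or by invoking the convention used in the randomized algorithm of section \ref{sec:upperbounds-rand}, where all randomness lives at the site and the coordinator is deterministic given its received messages. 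Under the distributed monitoring model used throughout the paper this is automatic, so the reduction goes through and the lemma follows.
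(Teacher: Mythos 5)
Your proposal is correct and is essentially the paper's own argument: the paper likewise builds a tracing summary by simulating $A$, recording all communication, and replaying the transcript up to the query time $t$, concluding $C + S = \Omega(L)$ from the tracing lower bound, with the randomized case following because the simulation is oblivious to the correctness probability. Your additional care about initial states, timestamps, and the location of the random coins fleshes out details the paper leaves implicit, but the route is the same.
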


\begin{proof}
  Suppose instead that for all constants $c < 1$ and all $n_0$ there is an $n >
  n_0$ such that $C(n) + S(n) < c L(n)$. Then we can write an algorithm $B$ for
  the tracing problem that uses $L'(n) < c L(n)$ bits of space: simulate $A$,
  recording all communication, and on a query $t$, play back the communication
  that occurred through time $t$.

  At no point did we use the fact that $A$ guarantees $P(|f(t) \m \fhat(t)| \le
  \ep f(t)) = 1$, so the claim still holds if we change the correctness
  requirement to $P \ge 2/3$.
\end{proof}

\section{Deterministic lower bound, theorem \ref{thm:lowerbounds-det}}
\label{sec:appendix-lowerbounds-det}

\begin{theorem}
  Let $\ep = 1 / m$ for some integer $m \ge 2$, let $n \ge 2 m$, let $c < 1$
  constant, and let $r \le n^c$ and even. If a deterministic summary $S(f)$
  guarantees, even only for sequences for which $v(n) = \frac{6 m + 9}{2 m + 6}
  \ep r$, that $|f(t) - \fhat(t)| \le \ep f(t)$ for all $t \le n$, then that
  summary must use $\Omega(\frac{\log n}{\ep} v(n))$ bits of space.
\end{theorem}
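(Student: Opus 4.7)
My plan is to exhibit an explicit family $\FF$ of sequences with the prescribed variability such that any correct summary must be injective on $\FF$; a counting bound on $|\FF|$ then gives the stated space lower bound. To construct the family, take $r$ even and, for each $r$-element subset $T \subseteq \{1, \ldots, n\}$, define $f_T$ by $f_T(0) = m$ and, for $t \ge 1$, $f_T(t) = f_T(t \m 1)$ when $t \notin T$ and $f_T(t) = 2m + 3 - f_T(t \m 1)$ (a flip between $m$ and $m+3$) when $t \in T$. Evenness of $r$ forces exactly $r/2$ up-flips ($m \to m+3$) and $r/2$ down-flips ($m+3 \to m$), and $\FF = \{f_T : |T| = r\}$ has cardinality $\binom{n}{r}$.

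Next I would compute the variability of a generic $f_T$. An up-flip contributes $|f'|/f = 3/(m+3)$ and a down-flip contributes $3/m$, so
\[
v(n) \;=\; \frac{r}{2}\p{\frac{3}{m} + \frac{3}{m+3}} \;=\; \frac{3r(2m+3)}{2m(m+3)} \;=\; \frac{6m+9}{2m+6}\,\ep r,
\]
matching the hypothesis. All values of $f_T$ lie in $\{m, m+3\}$, so the ``$\min\{1,\cdot\}$'' truncation in the definition of variability never activates.

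Then I would argue pairwise distinguishability: if $T \ne T'$, let $t^*$ be the smallest index in $T \triangle T'$. By construction $f_T$ and $f_{T'}$ agree on $[0, t^* \m 1]$ but $\{f_T(t^*), f_{T'}(t^*)\} = \{m, m+3\}$. A single summary shared by both would have to output an estimate $\fhat(t^*)$ lying simultaneously in $[m - 1,\, m + 1]$ and $[(m+3)(m-1)/m,\, (m+3)(m+1)/m]$, an empty intersection once $(m+3)(m-1)/m > m+1$, i.e. for $m \ge 4$; the edge cases $m \in \{2, 3\}$ are handled by widening the gap from $3$ to a slightly larger constant, which only alters the leading constant in the variability formula. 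Hence $f_T \mapsto S(f_T)$ is injective on $\FF$, so the summary must assume at least $|\FF| = \binom{n}{r}$ distinct values, requiring at least $\log_2 \binom{n}{r} \ge r \log(n/r) \ge (1-c)\, r \log n$ bits, where we used $r \le n^c$. Since step two gave $r = \Theta(v(n)/\ep)$, the bound becomes $\Omega\p{\frac{\log n}{\ep} v(n)}$.

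The main obstacle is the distinguishability step rather than the counting: the overlap calculation between the two ``valid estimate'' intervals at $t^*$ is tight at $m = 3$ and actually fails at $m = 2$ for the gap $3$, so a small case distinction (or a slightly larger gap chosen as a function of $m$) is needed to obtain a clean injectivity argument uniformly over the range $m \ge 2$ claimed in the theorem. Everything else is bookkeeping: the evenness of $r$ makes the variability computation exact, and the hypothesis $r \le n^c$ exists precisely to convert $\log \binom{n}{r}$ into $\Omega(r \log n)$.
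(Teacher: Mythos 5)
Your proposal is correct and follows essentially the same route as the paper's own proof: the same family of sequences flipping between $m$ and $m+3$ at $r$ chosen positions, the same variability computation, the same distinguishability-plus-counting argument via $\binom{n}{r} \ge (n/r)^r$ and $r \le n^c$. Your attention to the edge cases $m \in \{2,3\}$, where the intervals $[m-1,m+1]$ and $[(m+3)(m-1)/m,(m+3)(m+1)/m]$ in fact intersect, is a point the paper's proof silently glosses over, so that extra care only strengthens the argument.
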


\begin{proof}
  We construct a family of input sequences of length $n$ and variability
  $\frac{6 m + 9}{2 m + 6} \ep r$. Choose sets of $r$ different indices $1
  \ldots n$ so that there are $\text{choose}(n,r)$ such sets.

  For each set $S$ we define an input sequence $f_S$. We define $f_S(0) = m$ and
  the rest of $f_S$ recursively: $f_S(t) = f_S(t \m 1)$ if $t$ is not in $S$,
  and $f_S(t) = (2 m + 3) - f_S(t \m 1)$ if $t$ is in $S$. (That is, switch
  between $m$ and $m+3$.)

  If $A$ and $B$ are two different sets, then $f_A \ne f_B$: let $i$ be the
  smallest index that is in one and not the other; say $i$ is in $A$. Then
  $f_A(1 \ldots (i \m 1)) = f_B(1 \ldots (i \m 1))$, but $f_A(i) \ne f_A(i \m 1)
  = f_B(i \m 1) = f_B(i)$.

  The variability of any $f_S$ is $\frac{6 m + 9}{2 m + 6} \ep r$: There are
  $r/2$ changes from $m$ to $m+3$ and another $r/2$ from $m+3$ to $m$. When we
  switch from $m$ to $m+3$, we get $|f'(t)/f(t)| = 3/(m+3)$, and when we switch
  from $m+3$ to $m$, we get $|f'(t)/f(t)| = 3/m$. Thus $\sum_t
  |\frac{f'(t)}{f(t)}| = \frac{r}{2} \frac{6 m + 9}{m (m+3)} = \frac{6 m + 9}{2
    m + 6} \ep r$.

  There are $\text{choose}(n,r) \ge (n/r)^r$ input sequences in our family, so
  to distinguish between any two input sequences we need at least $r \log(n/r) =
  \Omega(r \log n)$ bits. Any summary that can determine for each $t$ the value
  $f(t)$ to within $\pm \ep f(t)$, must also distinguish between $f(t) = m$ and
  $f(t) = m+3$, since there is no value within $\ep m$ of $m$ and also within
  $\ep (m+3)$ of $m+3$. Since this summary must distinguish between $f(t) = m$
  and $f(t) = m+3$ for all $t$, it must distinguish between any two input
  sequences in the family, and therefore needs $\Omega(r \log n)$ bits.
\end{proof}

\section{Randomized lower bound, lemma \ref{lem:lowerbounds-rand-lem1}}
\label{sec:appendix-lowerbounds-rand-lem1}

\begin{lemma}
  Let $\FF$ be a family of sequences of length $n$ and variabilities $\le v$
  such that no two sequences in $\FF$ match. If a summary $S(f)$ guarantees for
  all $f$ in $\FF$ that $P(|f(t) - \fhat(t)| \le \ep f(t)) \;\ge\; 99/100$ for
  all $t \le n$, then that summary must use $\Omega(\log |\FF|)$ bits of space.
\end{lemma}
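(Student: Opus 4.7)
The plan is to reduce from the one-way randomized communication complexity of the $\text{Index}_N$ problem, where $N = \lceil \log_2 |\FF| \rceil$. Fix a deterministic enumeration $f_1,\ldots,f_{|\FF|}$ of $\FF$. Alice, given $x \in \{0,1\}^N$ encoding an index $i(x)$, simulates the streaming algorithm with summary $S$ on the input stream $f_{i(x)}$ and sends the final summary to Bob (the algorithm's random coins can be handled by public randomness, which does not change the asymptotics). On any query $j \in [N]$, Bob will use $S(f_{i(x)})$ to identify which sequence in $\FF$ was streamed, then read off $x_j$.

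To show the identification succeeds with constant probability, I would bound the number of positions where $\fhat$ is a bad estimate of $f_{i(x)}$. For each $t$, let $X_t$ be the indicator of $|\fhat(t) - f_{i(x)}(t)| > \ep f_{i(x)}(t)$; by hypothesis $E[X_t] \le 1/100$, so $E[\sum_t X_t] \le n/100$, and by Markov's inequality $\sum_t X_t < n/10$ with probability at least $9/10$. At every $t$ with $X_t = 0$, the correctness inequality implies the overlap condition, so $\fhat$ overlaps $f_{i(x)}$ in at least $9n/10$ positions. Condition on this event. I would then argue that $f_{i(x)}$ is the unique $g \in \FF$ that overlaps $\fhat$ in more than $7n/10$ positions: at any position $t \notin B := \{t : X_t = 1\}$, $\fhat(t)$ is sandwiched near $f_{i(x)}(t)$, so $\fhat$ overlaps a different $g$ at $t$ only when $g(t)$ is itself close to $f_{i(x)}(t)$, i.e.\ $f_{i(x)}$ and $g$ overlap at $t$. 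Hence the number of overlaps of $\fhat$ with $g$ is at most $|B|$ plus the number of overlaps of $f_{i(x)}$ with $g$, giving at most $n/10 + 6n/10 = 7n/10$ by the no-match hypothesis. Bob thus recovers $i(x)$, and hence $x_j$, with probability $\ge 9/10$; invoking the $\Omega(N)$ lower bound for $\text{Index}_N$ at any constant success probability greater than $1/2$ yields $|S(\cdot)| = \Omega(\log |\FF|)$.

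The main obstacle is the composition step used for uniqueness: the overlap relation $|f(t)-g(t)| \le \ep \max\{f(t),g(t)\}$ is not quite transitive through $\fhat$, so combining ``$\fhat$ is $\ep$-close to $f_{i(x)}$'' with ``$\fhat$ overlaps $g$'' formally only gives that $f_{i(x)}$ and $g$ agree up to some multiplicative slack of order $2\ep$ rather than $\ep$. One handles this either by a mild relaxation of the overlap threshold with a corresponding adjustment of the fractions $9/10$ and $6/10$, or, more cleanly, by appealing to the two-valued structure of the family built in Lemma~\ref{lem:lowerbounds-rand-lem2}, where values live in $\{m,m+3\}$ with $m = 1/\ep$ so that overlap at a position degenerates to equality and the composition becomes exact.
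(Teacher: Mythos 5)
Your proposal is correct and follows essentially the same route as the paper's proof: Markov's inequality to show $\fhat$ overlaps the true sequence in at least $\frac{9}{10}n$ positions, uniqueness of the matching sequence via the no-match hypothesis, and a reduction from one-way $\text{Index}_N$ with Alice deterministically enumerating $\FF$. The non-transitivity of the overlap relation that you flag is in fact glossed over in the paper's own proof as well, and either of your proposed fixes (relaxing the constants, or exploiting the two-valued $\{m, m+3\}$ structure from Lemma~\ref{lem:lowerbounds-rand-lem2}) is a legitimate way to close it.
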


\begin{proof}
  Let $S(f)$ be the summary for a sequence $f$, and sample $\fhat(1) \ldots
  \fhat(n)$ once each using $S(f)$ to get $\fhat$. Let $A$ be the event that
  $|\{t \::\: |f(t) \m \fhat(t)| \le \ep f(t)\}| \;\ge\; \frac{90}{100} n$. By
  Markov's inequality and the guarantee in the premise, we must have $P(A) \ge
  9/10$.

  Let $\omega$ define the random bits used in constructing $S(f)$ and in
  sampling $\fhat$. For any choice $\omega$ in $A$ we have that $\fhat$ overlaps
  with $f$ in at least $\frac{9}{10} n$ positions, which means that $\fhat$
  overlaps with any other $g \in \FF$ in at most $\frac{7}{10} n$ positions: at
  most the $\frac{6}{10} n$ in which $f$ and $g$ could overlap, plus the
  $\frac{1}{10} n$ in which $\fhat$ and $f$ might not overlap.

  Define $F \subseteq \FF$ to be the sequences $g$ that overlap with $\fhat$ in
  at least $\frac{9}{10} n$ positions. This means that when $\omega \in A$ we
  have $|F| = 1$, and therefore with probability at least $9/10$ we can identify
  which sequence $f$ had been used to construct $S(f)$.

  We now prove our claim by reducing the $\text{Index}_N$ problem to the problem
  of tracing the history of a sequence $f$. The following statement of
  $\text{Index}_N$ is roughly as in Kushilevitz and Nisan \cite{KN1997}. There
  are two parties, Alice and Bob. Alice has an input string $x$ of length $N =
  \log_2 |\FF|$ and Bob has an input string $i$ of length $\log_2 N$ that is
  interpreted as an index into $x$. Alice sends a message to Bob, and then Bob
  must output $x_i$ correctly with probability at least $9/10$.

  Consider the following algorithm for solving $\text{Index}_N$. Alice
  deterministically generates a family $\FF$ of sequences of length $n$ and
  variabilities $\le v$ such that no two match, by iterating over all possible
  sequences and choosing each next one that doesn't match any already chosen.
  Her $\log_2 |\FF|$ bits of input $x$ index a sequence $f$ in $\FF$. Alice
  computes a summary $S(f)$ and sends it to Bob. After receiving $S(f)$, Bob
  computes $\fhat(t)$ for every $t = 1 \ldots n$, to get a sequence $\fhat$. He
  then generates $\FF$ himself and creates a set $F$ of all sequences in $\FF$
  that overlap with $\fhat$ in at least $\frac{9}{10} n$ positions. If $F =
  \{f\}$, which it is with probability at least $9/10$, then Bob can infer every
  bit of $x$.

  Since the $\text{Index}_N$ problem is known to have a one-way communication
  complexity of $\Omega(N)$, it must be that $|S(f)| = \Omega(\log |\FF|)$.
\end{proof}

\section{Randomized lower bound, lemma \ref{lem:lowerbounds-rand-lem2}}
\label{sec:appendix-lowerbounds-rand-lem2}

\begin{lemma}
  For all $\ep \le 1/2$, $v \ge 32400 \ep \ln C$, and $n > 3 v / \ep$, there is
  a family $\FF$ of size $e^{\Omega(v/\ep)}$ of sequences of size $n$ such that:
  \begin{enumerate*}
  \item no two sequences match, and
  \item every sequence has variability at most $v$.
  \end{enumerate*}
\end{lemma}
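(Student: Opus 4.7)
The plan is to build $\FF$ via the probabilistic method together with an alteration step. Set $m = 1/\ep$ and consider random sequences that take values only in $\{m, m+3\}$, starting at $f(0) = m$, where at each time $t = 1, \ldots, n$ one independently flips to the other value with probability $p := v/(6 \ep n)$ and otherwise stays put. Since $\ep \le 1/2$, the two candidate values $m$ and $m+3$ fail the relative-error overlap condition ($3 > \ep(m+3) = 1 + 3\ep$), so ``overlap at time $t$'' collapses exactly to ``$f(t) = g(t)$''. I will sample $N = e^{c v / \ep}$ such sequences independently for a sufficiently small constant $c > 0$, then delete any sequence violating the variability bound and one sequence from each matching pair, and show that at least $N/2$ survive.

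First I would verify the variability bound for a single sample. Each flip contributes $3/m = 3\ep$ or $3/(m+3) < 3\ep$ to the variability, so the variability is at most $3\ep$ times the number of flips. The number of flips is Binomial$(n, p)$ with mean $p n = v/(6\ep)$, and a standard multiplicative Chernoff bound yields ``flips $\le v/(3\ep)$'' (and hence variability $\le v$) except with probability $e^{-\Omega(v/\ep)}$.

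Second, I would bound the probability that two independently sampled sequences $f, g$ match. Following the paper's hint, the joint process $X_t = (f(t), g(t))$ is a Markov chain on the four-state space $\{m, m+3\}^2$ whose transition matrix is the tensor square of the one-coordinate chain. The overlap indicator is $y(X_t) = \mathbf{1}[X_t \in \{(m,m),(m+3,m+3)\}]$; by the $m \leftrightarrow m+3$ symmetry the stationary distribution is uniform on the four states, so $E_\pi y = 1/2$. Applying the Markov-chain Chernoff bound of Chung, Lam, Liu, and Mitzenmacher \cite{CLLM2012} with additive deviation $1/10$ and the universal constant $C$ it provides, I expect to obtain
\[
P\bigl(\tsum_{t=1}^n y(X_t) \ge \tfrac{6}{10} n\bigr) \;\le\; C \exp(-\Omega(n)) \;\le\; C \exp(-\Omega(v/\ep)),
\]
where the last step uses the hypothesis $n > 3 v / \ep$.

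Finally, the alteration step: the expected number of samples with variability $> v$ is at most $N e^{-\Omega(v/\ep)}$, and the expected number of matching pairs is at most $\binom{N}{2} C e^{-\Omega(v/\ep)}$. Choosing $c$ small and using the hypothesis $v \ge 32400 \ep \ln C$ to absorb the leading factor of $C$ into the exponent, both expectations can be driven below $N/4$, so by averaging there is a realization in which at least $N/2 = e^{\Omega(v/\ep)}$ sequences remain after deletion; these form $\FF$. I expect the main obstacle to be applying \cite{CLLM2012} correctly: one must verify that its hypotheses (spectral-gap or mixing-time conditions) are satisfied by our sticky four-state chain, which has mixing time $\Theta(1/p)$ rather than $O(1)$; that the non-stationary start at $(m,m)$ is absorbed into the theorem's leading constant $C$; and that the numerical constants line up so the tail bound is genuinely $e^{-\Omega(v/\ep)}$ rather than a weaker function of $p$ and $n$, which is what ultimately forces the explicit $32400 \ep \ln C$ lower bound on $v$ in the statement.
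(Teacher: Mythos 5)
Your construction and overall plan are essentially the paper's: the same random sequences over $\{m, m+3\}$ with flip probability $p = v/(6\ep n)$, a Chernoff bound on the number of flips (each costing at most $3\ep$ of variability), the Chung--Lam--Liu--Mitzenmacher Markov-chain Chernoff bound for the matching probability, and a union-bound/alteration step to extract a family of size $e^{\Omega(v/\ep)}$; your four-state product chain is just the paper's two-state ``same/different'' chain before quotienting, and your alteration step is an equivalent repackaging of the paper's ``each bad event has probability at most $1/5$'' argument.

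Two points where your write-up diverges and needs repair. First, the displayed inequality $P(\tsum_t y(X_t) \ge \frac{6}{10}n) \le C\exp(-\Omega(n))$ is not what \cite{CLLM2012} gives: the exponent is $\delta^2 \mu n / (72 T)$, and since the sticky chain has mixing time $T = \Theta(1/p) = \Theta(\ep n / v)$, this is $\Theta(v/\ep)$, not $\Theta(n)$. The hypothesis $n > 3v/\ep$ is not what converts one into the other; its role is to guarantee $p = v/(6\ep n) \le 1/18$, so that $p$ is a legitimate (small) flip probability and $1 - p \ge 1/2$ in the mixing-time estimate. You flag the right concern at the end, but the chain of inequalities as written is incorrect even though the final bound $C\exp(-\Omega(v/\ep))$ is the right one. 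Second, you start every sequence deterministically at $f(0) = m$, so the joint chain starts at $(m,m)$ rather than at its stationary distribution, and the version of the CLLM theorem quoted in the paper requires a stationary start. The paper sidesteps this exactly by setting $f(0)$ uniformly at random in $\{m, m+3\}$, which makes the same/different chain start in $\pi = (1/2,1/2)$. Your proposed fix (absorbing the non-stationary start into the constant) requires invoking a different, non-stationary-start form of the theorem and perturbs the constants, which matters here because the lemma's hypothesis hard-codes $v \ge 32400\,\ep \ln C$; randomizing the initial value is the clean fix and costs nothing.
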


\begin{proof}
  We construct $\FF$ so that each of the two items holds (separately) with
  probability at least $4/5$. Let $m = 1/\ep$. To construct one sequence in
  $\FF$, first define $f(0) = m$ with probability $1/2$, else $f(0) = m \!+\!
  3$. Then, for $t = 1 \ldots n$: define $f(t) = (2m \!+\! 3) - f(t \m 1)$ with
  probability $p = v/6 \ep n$, else $f(t) = f(t \m 1)$. That is, switch from $m$
  to $m \!+\! 3$ (or vice-versa) with probability $p = v/6 \ep n$.

  We first prove that the probability is at most $1/5$ that any two sequences
  $f$ and $g$ match. We have that $P(f(0) \eq g(0)) = 1/2$. If at any point in
  time we have $f(t) = g(t)$, then $P(f(t \!+\! 1) \eq g(t \!+\! 1)) = \alpha =
  1 - 2 p (1 \m p)$ and $P(f(t \!+\! 1) \!\ne\! g(t \!+\! 1)) = 1 - \alpha = 2 p
  (1 \m p)$. Similarly, if $f(t) \ne g(t)$, then $P(f(t \!+\! 1) \eq g(t \!+\!
  1)) = 1 - \alpha$ and $P(f(t \!+\! 1) \!\ne\! g(t \!+\! 1)) = \alpha$.

  The overlap between $f$ and $g$ is the number of times $t$ that $f(t) = g(t)$.
  We model this situation with a Markov chain $M$ with two states, $c$ for
  ``same'' (that is, $f = g$) and $d$ for ``different'' ($f \ne g$). Let $s_t$
  be the state after $t$ steps, and let $p_t = (p_t(c), p_t(d))$ be the
  probabilities that $M$ is in state $c$ and $d$ after step $t$. The stationary
  distribution $\pi = (1/2, 1/2)$, which also happens to be our initial
  distribution. We can model the overlap between $f$ and $g$ by defining a
  function $y(s_t) = 1$ if $s_t = c$ and $y(s_t) = 0$ otherwise; then $Y =
  \sum_{t = 1}^n y(s_t)$ is the overlap between $f$ and $g$. The expected value
  $E(y(\pi))$ of $y$ evaluated on $\pi$ is $1/2$.

  The $(1/8)$-mixing time $T$ is defined as the smallest time $T$ such that
  $\frac{1}{2}||M^t r_0 - \pi||_1 \le 1/8$ over all initial distributions $r_0$.
  Let $r_0$ be any initial distribution and $r_t = M^t r_0$. If we define
  $\Delta_t = r_t(c) - \pi(c)$, then $\Delta_t = (2 \alpha \m 1)^t \Delta_0$. We
  can similarly bound $|r_t(d) - \pi(d)|$, so we can bound
  \begin{align*}
    T
    \;&\le\; \frac{\ln(8)}{\ln(1/(2 \alpha \m 1))}
    \;\le\; \frac{3}{(1 - (2 \alpha \m 1))}
    \;\le\; \frac{3}{2 p (1 \m p)}
    \;\le\; \frac{3}{2 p}
    \;=\; \frac{9 \ep n}{v}
  \end{align*}
  since $1 - p \ge 1/2$ and since $1/\ln(1/x) \le 1/(1 \m x)$ for $x$ in
  $(0,1)$. With this information we can now apply a sledgehammer of a result by
  Chung, Lam, Liu, and Mitzenmacher \cite{CLLM2012}.
  Our fact \ref{fac:lowerbounds-rand-lem-cllm} is their theorem 3.1, specialized
  a bit to our situation:
  \begin{fact}
    \label{fac:lowerbounds-rand-lem-cllm}
    Let $M$ be an ergodic Markov chain with state space $S$. Let $T$ be its
    $(1/8)$-mixing time. Let $(s_1, \ldots, s_n)$ denote an $n$-step random walk
    on $M$ starting from its stationary distribution $\pi$. Let $y$ be a weight
    function such that $E(y(\pi)) = \mu$. Define the total weight of the walk by
    $Y = \sum_{t = 1}^n y(s_t)$. Then there exists some universal constant $C$
    such that $P(Y \ge (1 + \delta) \mu n) \;\le\; C \exp(-\delta^2 \mu n / 72
    T)$ when $0 < \delta < 1$.
  \end{fact}
  Specifically, this means that $P(Y \ge \frac{6}{10} n) \;\le\; C \exp(-v / (25
  \cdot 72 \cdot 9 \cdot \ep))$. Since $v$ is large enough, we can also write $P
  \le \exp(-v / 32400 \ep)$. If $|\FF| = \frac{1}{5} \exp(v / (2 \cdot 32400
  \ep))$, then by the union bound, with probability at least $4/5$, no pair of
  sequences $f, g$ matches.

  We also must prove that there are enough sequences with variability at most
  $v$. The change in variability due to a single switch from $m$ to $m \!+\! 3$
  (or vice-versa) is at most $3/m = 3 \ep$. For any sequence $f$, let $U_t = 1$
  if $f$ switched at time $t$, else $U_t = 0$. The expected number of switches
  is $v / 6 \ep$; using a standard Chernoff bound, $P(\sum_t U_t \ge 2 v / 6
  \ep) \;\le\; \exp(-v / 18 \ep) \;\le\; 1/10$. Suppose we sample $N$ sequences
  and $B$ of them have more than $2 v / 6 \ep$ switches. In expectation there
  are at most $E(B) \le \frac{1}{10} N$ that have too many switches. By Markov's
  inequality, $P(B \ge N/2) \;\le\; 1/5$, so we can toss out the $\le N/2$ bad
  sequences. This gives us a final size of $\FF$ of $\frac{1}{10} \exp(v / (2
  \cdot 32400 \ep))$.
\end{proof}

\section{Tracking item frequencies, section \ref{sec:framework-itemfreq}}
\label{sec:appendix-framework-itemfreq}

\paragraph{Problem definition}
The problem of tracking item frequencies is only slightly different than the
counting problem we've considered so far. In this problem there is a universe
$U$ of items and we maintain a dataset $D(t)$ that changes over time. At each
new timestep $n$, either some item $\ell$ from $U$ is added to $D$, or some item
$\ell$ from $D$ is removed. This update is told to a single site $i$; that is,
site $i(n)$ receives an update $f_{\ell}'(n) = \pm 1$.

The frequency $f_{\ell}(t)$ of item $\ell$ at time $t$ is the number of copies
of $\ell$ that appear in $D(t)$. The first frequency moment $F_1(t)$ at time $t$
is the total number of items $|D(t)|$. The problem is to maintain estimates
$\fhat_{\ell}(n)$ at the coordinator so that for all times $n$ and all items
$\ell$ we have that $P(|f_{\ell}(n) \m \fhat_{\ell}(n)| \le \ep F_1(n))$ is
large.

Since in this problem we are tracking each item frequency to $\ep F_1(n)$, we
use $F_1$-variability instead, defining $v'(t) = \min\{1, 1/F_1(t)\}$.

\subsubsection{Item frequencies with low communication}
We first partition time into blocks as in section \ref{sec:upperbounds-block},
using $f = F_1$. That is, at the end of each block we know the values $n$ and
$F_1(n)$ deterministically, and also that either $r = 0$ holds or that
$F_1(n_j)$ is within a factor of two of $F_1(n_{j-1})$.

For tracking during blocks we modify the deterministic algorithm so that each
site $i$ holds counters $d_{i\ell}$ and $\delta_{i\ell}$ for every item $\ell$.
It also holds counters $f_{i\ell}$ of the total number of copies of $\ell$ seen
at site $i$ across all blocks.

At the end of each block, each site $i$ reports all $f_{i\ell} \ge \ep 2^r / 3$
(using the new value of $r$). If site $i$ reports counter $f_{i\ell}$ then it
starts the next block with $d_{i\ell} = \delta_{i\ell} = 0$; otherwise,
$d_{i\ell}$ is updated to $d_{i\ell} + \delta_{i\ell}$ and then $\delta_{i\ell}$
is reset to zero. Within a block $r \ge 1$, the \cnd{} is true when
$\delta_{i\ell} \ge \ep 2^r / 3$.

The coordinator maintains estimates $\fhat_{i\ell}$ of $f_{i\ell}$ for each site
$i$ and item $\ell$. Upon receiving an update $\delta_{i\ell}$ during a block
the coordinator updates its estimate $\fhat_{i\ell} = \fhat_{i\ell} +
\delta_{i\ell}$.

\paragraph{Estimation error}
The total error in the estimate $\fhat_{i\ell}(n)$ at any time $n$ is the error
due to $d_{i\ell}$ plus the error due to $\delta_{i\ell}$. In both cases these
quantities are bounded by $\ep 2^r / 3 \le \ep F_1(n) / 3$.

\paragraph{Communication}
The total communication for a block is the total communicated within and at the
end of the block. Within a block, all $\delta_{i\ell}$ start at zero, and there
are at most $2^r k$ updates, so the total number of messages sent is $3 k /
\ep$. At the end of a block, $f_{i\ell} \ge \ep 2^r / 3$ is true for at most $12
k / \ep$ counters $f_{i\ell}$. Therefore the total number of messages
$O(\frac{k}{\ep} v(n))$.

\subsubsection{Item frequencies in small space+communication}
The algorithm so far uses $|U|$ counters per site, which is prohibitive in terms
of space. In \cite{CM2005} Cormode and Muthukrishnan show that in order to track
over a non-distributed update stream each $f_{\ell}(n)$ so that for all $\ell$
and all times $n$ we have $P(|f_{\ell}(n) \m \fhat_{\ell}(n)| \le \ep F_1(n) /
3) \:\ge\: 8/9$, it suffices to randomly partition each item in $U$ into one of
$27 / \ep$ classes using a pairwise-independent hash function $h$, and to
estimate $f_{\ell}(n)$ as $f_{h(\ell)}(n)$. The $27 / \ep$ counters and the hash
function $h$ together form their Count-Min Sketch \cite{CM2005}.

Similarly, in \cite{GM2006} \cite{GM2007} Ganguly and Majumder adapt a data
structure of Gasieniec and Muthukrishnan \cite{Muthukrishnan2005}, which they
call the CR-precis, to deterministically track each $f_{\ell}(n)$ to $\ep F_1(n)
/ 3$ error. This data structure uses $\frac{3}{\ep}$ rows of $\frac{6 \log
  |U|}{\ep \log 1/\ep}$ counters, and estimates $f_{\ell}(n)$ as the average
over rows $r$ of $f_{h(r,\ell)}(n)$. (Ganguly and Majumder actually take the
minimum over the rows $r$, but the average works too and yields a linear
sketch.)

In either case, we can first reduce our set of items $\ell$ to a small number of
counters $c$, and instead of tracking $f_{i\ell}$ we track $f_{ic}$ for each
counter $c$. The coordinator can then linearly combine its estimates
$\fhat_{ic}$ to obtain estimates $\fhat_{i\ell}$ for each item $\ell$. This
introduces another $\ep F_1(n) / 3$ error, yielding algorithms that guarantee
\begin{itemize*}
\item $P(|f_{i\ell}(n) \!-\! \fhat_{i\ell}(n)| \le \ep F_1(n)) \:=\: 1$ in
  $O(\frac{k \log |U|}{\ep^2 \log 1/\ep} v(n) \log n)$ bits of space +
  communication, and
\item $P(|f_{i\ell}(n) \!-\! \fhat_{i\ell}(n)| \le \ep F_1(n)) \:\ge\: 8/9$ in
  $O(k \log |U| + \frac{k}{\ep} v(n) \log n)$ bits of space + communication.
\end{itemize*}

\subsubsection{Remarks}
We obtain a randomized communication bound of $O(\frac{k}{\ep} v(n))$ messages,
but it might be possible to do better. In \cite{HYZ2012} Huang et. al. both
develop a randomized counting algorithm ($O(\frac{\sqrt{k}}{\ep} \log n)$
messages) and also extend it to the problem of tracking item frequencies to get
the same communication bound. Unfortunately, their algorithm appears to require
the total variance in their estimate at any time $t < n$ to be bounded by a
constant factor of the variance at time $n$. This is only guaranteed to be true
when item deletions are not permitted (and $F_1$ grows monotonically).
We avoid this problem in section \ref{sec:upperbounds-rand} for tracking $f =
F_1$ by deterministically updating $F_1$ at the end of each block. For this
problem, though, deterministically updating all of the large $\fhat_{i\ell}$ at
the end of each block could incur $O(1/\ep)$ messages.
Whether it is also possible to probabilistically track item frequencies over
general update streams in $O(\frac{\sqrt{k}}{\ep} v(n))$ messages remains open.

\section{Aggregate functions with one site, section \ref{sec:framework-gf}}
\label{sec:appendix-framework-gf}

The single-site algorithm of section \ref{sec:framework-gf} is: whenever $|f -
\fhat| > \ep f$, send $f$ to the coordinator.

\begin{proof}
  If $f(n) = 0$ then $v'(n) = 1$. Also, if $f(n)$ changes sign from $f(n \m 1)$,
  then $v'(n) = 1$. So consider intervals over which $f(n)$ is nonzero and
  doesn't change sign. Over such an interval, let $\Phi(n) = |\frac{f(n) -
    \fhat(n)}{f(n)}|$. If at time $n$ we update $\fhat$ then $\Phi(n) = 0$.
  Otherwise,
  \begin{align*}
    \Phi(n)
    \;&=\; \frac{|f(n \m 1) - \fhat(n \m 1) + f'(n)|}{|f(n)|}
    \;\le\; \frac{|f(n \m 1) - \fhat(n \m 1)|}{|f(n)|} \:+\:
            \frac{|f'(n)|}{|f(n)|} \\
    \;&=\; \frac{|f(n \m 1)|}{|f(n)|} \Phi(n \m 1) \:+\:
           \frac{|f'(n)|}{|f(n)|}
    \;\le\; \frac{|f(n)| + |f'(n)|}{|f(n)|} \Phi(n \m 1) \:+\:
            \frac{|f'(n)|}{|f(n)|} \\
    \;&\le\; \Phi(n \m 1) \:+\:
            \frac{(1 \!+\! \Phi(n \m 1)) |f'(n)|}{|f(n)|}
  \end{align*}
  Since $\Phi(n) \le \ep$ we have $|\Phi'(n)| \le (1 \!+\! \ep)
  |\frac{f'(n)}{f(n)}|$. We only send a message each time that $\Phi$ would be
  more than $\ep$, so the total number of messages sent is at most the total
  increase in $\Phi$, which is $\sum_{t=1}^n \min \{1, |\frac{f'(t)}{f(t)}|\}$.
\end{proof}

\end{document}